\documentclass[11pt]{article}
\usepackage{pdfpages}
\usepackage{multirow}
\usepackage{array}
\usepackage{graphicx}
\usepackage{subcaption}
\usepackage{latexsym}
\usepackage{pstricks}
\usepackage{pst-node}
\usepackage{pst-tree}
\usepackage{url}
\usepackage{times}
\usepackage{helvet}
\usepackage{courier}
\usepackage{algorithmic}
\usepackage[ruled,vlined,linesnumbered]{algorithm2e}
\usepackage{amsthm}
\usepackage{graphicx,amssymb,amsmath}
\usepackage{mathrsfs}
\usepackage{mathtools}
\usepackage{comment}

\usepackage[hypertexnames=false, colorlinks=true, citecolor=blue, linkcolor=red, urlcolor=black]{hyperref}

\usepackage{geometry}
\geometry{left=1in,right=1in,top=1in,bottom=1in}
\usepackage{authblk}

\makeatletter
\renewcommand*{\verbatim@font}{\sffamily}

\DeclareMathOperator*{\E}{\mathbb{E}}


\title{Simpler Partial Derandomization of PPSZ for $k$-SAT}

\author{S. Cliff Liu \\
Princeton University	\\
\textsf{sixuel@cs.princeton.edu}
}



\begin{document}

\maketitle


\newcounter{dummy} \numberwithin{dummy}{section}
\newtheorem{lemma}[dummy]{Lemma}
\newtheorem{definition}[dummy]{Definition}
\newtheorem{remark}[dummy]{Remark}
\newtheorem{corollary}[dummy]{Corollary}
\newtheorem{claim}[dummy]{Claim}
\newtheorem{observation}[dummy]{Observation}

\newtheorem{theorem}{Theorem}

\renewcommand{\algorithmicrequire}{\textbf{Input:}}
\renewcommand{\algorithmicensure}{\textbf{Output:}}

\begin{abstract}
    We give a simpler derandomization of the best known $k$-SAT algorithm PPSZ [FOCS'97, JACM'05] for $k$-SAT with \emph{sub-exponential} number of solutions. The existing derandomization uses a complicated construction of small sample space, while we only use \emph{hashing}.
    Our algorithm and theorem also have a nice byproduct: It outperforms the current fastest deterministic $k$-SAT algorithm when the formula has \emph{moderately exponential} number of solutions.
\end{abstract}


\section{Introduction}\label{intro}

As one of the prototypical \textsf{NP}-complete problems, the $k$-SAT problem is to decide whether a given $k$-CNF has a solution and output one if it has.
Among many revolutionary algorithms for solving $k$-SAT, two of them stick out:
Sch{\"{o}}ning's algorithm based on random walk \cite{schoning1999probabilistic} and PPSZ based on resolution  \cite{DBLP:journals/jacm/PaturiPSZ05}, where PPSZ also has profound implications in many aspects of complexity theory and algorithm design \cite{DBLP:conf/focs/PaturiPZ97, DBLP:journals/jcss/ImpagliazzoP01, DBLP:journals/jcss/ImpagliazzoPZ01, DBLP:conf/coco/CalabroIP06, DBLP:journals/siamcomp/Williams13, DBLP:conf/focs/AbboudW14}.

Both Sch{\"{o}}ning's algorithm and PPSZ are randomized:
Each try of the polynomial-time algorithms finds a solution with probability $c^{-n}$ for some $c \in (1, 2)$ where $n$ is the number of variables in the input formula.
Continuous progresses have been made in derandomizing Sch{\"{o}}ning's algorithm:
Starting with a partial derandomization in \cite{dantsin2002deterministic}, it is fully derandomized in \cite{moser2011full} and further improved in \cite{DBLP:conf/icalp/Liu18}, giving a deterministic algorithm for $3$-SAT  that runs in time $1.328^n$, which is currently the best.
In contrast, derandomizing PPSZ is notoriously hard.

It should be mentioned that the behavior of PPSZ was not completely understood at the first time it was invented: There is an exponential loss in the upper bound for General $k$-SAT, comparing with that for Unique $k$-SAT (the formula guarantees to have at most one solution).
Nevertheless, Unique $k$-SAT is believed to be at least as hard as General $k$-SAT \cite{DBLP:journals/jcss/CalabroIKP08}.
In \cite{hertli20143}, it is shown that the bound for Unique $k$-SAT holds in general, making PPSZ the current fastest randomized $k$-SAT algorithm: $3$-SAT can be solved in time $1.308^n$ with one-sided error.

The solely known result towards derandomizing PPSZ is from \cite{DBLP:conf/sat/Rolf05}, which only works for Unique $k$-SAT.
Their method of \emph{small sample space} (cf. \S{16.2} in \cite{alon2016probabilistic}) approximates the uniform distribution using a discrete subset with polynomial size, which complicates the analysis by introducing the precision of real numbers and convergence rate.
As mentioned above, the analysis of PPSZ for the General case can be much more challenging, and it is an important open question of whether this case can be derandomized even with a moderate sacrifice in the running time.

In this paper, we provide a very simple deterministic algorithm that matches the upper bound of the randomized PPSZ algorithm when the formula has sub-exponential number of solutions.
(The algorithm needs not to know the number of solutions in advance.)
Our analysis is simpler than the original randomized version \cite{DBLP:journals/jacm/PaturiPSZ05}.
Comparing with the complicated construction of small sample space in the previous derandomization \cite{DBLP:conf/sat/Rolf05}, our proof only uses hashing.

\subsection{Techniques and Main Result}

To get a sense of how our approach works, we now sketch the PPSZ algorithm and give the high-level ideas in our derandomization, then formally state our main result.

The input formula is preprocessed until no new clause of length at most $\tau$ can be obtained by pairwise resolution.
(Think of $\tau$ as a large enough integer for now.)
Each try of the algorithm processes the variables one at a time in a uniform random order:
If the variable appears in a clause of length one then it is \emph{forced} to take the only truth value to satisfy the clause, otherwise it is \emph{guessed} to take a uniform random truth value.
The probability of finding a solution is decided by the number of guessed variables, which contain two parts: the \emph{frozen} variables that take the same truth values in all solutions, and the others called the \emph{liquid} variables.
There are two places in PPSZ that use randomness:
\begin{enumerate}
    \item The random order of the variables. \label{rd_1}
    \item The random values assigned to the guessed variables. \label{rd_2}
\end{enumerate}
To remove the randomness in (\ref{rd_1}), the key observation is that in the Unique case, one only needs the order of every $\tau$ variables to be uniformly random, which can be achieved by using a $\tau$-wise independent distribution, then the (frozen) variables are forced with roughly the same probability as that using mutually independence.
By choosing $\tau$ wisely, such a distribution exists with sub-exponential support.
To remove the randomness in (\ref{rd_2}), we enumerate all possible truth values of the guessed variables.
To obtain a good time bound, one needs to bound the number of guessed variables.
In the Unique case, all variables are frozen in the original input formula.
In the General case, we show that there exists a \emph{good} order of variables such that the number of liquid variables is upper bounded by some function of the number of solutions when the variables are fixed to certain values according to this order.
After fixing the liquid variables, we reduce to the Unique case.
The expected number of frozen variables that are guessed can be upper bounded using the \emph{frozen tree}, which is simplified from the \emph{critical clause tree} in \cite{DBLP:journals/jacm/PaturiPSZ05}.

Our derandomization is \emph{partial} in the sense that the randomness in (\ref{rd_1}) and (\ref{rd_2}) is completely removed with sub-exponential slowdown, but the reduction from General to Unique introduces an additional factor in the running time depending on the number of solutions.
The main result is stated below:
\begin{theorem}[Main Result]\label{thm_main}
    There exists a deterministic algorithm for $k$-SAT
    such that for any $k$-CNF $F$ on $n$ variables with $2^{\delta n + o(n)}$ solutions,
    the algorithm outputs a solution of $F$ in time
    \begin{equation*}
        2^{(1 - \lambda_k + \lambda_k \delta + \rho(\delta)) n + o(n)},
    \end{equation*}
    where $\rho(\delta) = -\delta \log_2 \delta - (1-\delta) \log_2(1-\delta)$ is the \emph{binary entropy function} and
    \footnote{The values of $\rho(0)$ and $\rho(1)$ are defined to be $0$. Some typical values: $\lambda_3 = 2 - 2 \ln 2 \approx 0.6137$ and $\lambda_4 \approx 0.4452$.} 
    \begin{equation*}
        \lambda_k = \sum_{j=1}^{\infty} \frac{1}{j (kj - j + 1)}.
    \end{equation*}
\end{theorem}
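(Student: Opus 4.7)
The plan is to prove the theorem in two stages: first derandomize the Unique case with only sub-exponential slowdown, and then reduce the general case with $2^{\delta n + o(n)}$ solutions to the Unique case by enumerating a suitable set of ``liquid'' variables and their truth values.

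For the Unique case (all $n$ variables frozen), I would first preprocess $F$ under pairwise resolution so that no new clause of length at most $\tau$ can be derived, where $\tau = \omega(1)$ is chosen small enough that a $\tau$-wise independent distribution over permutations of $[n]$ has support of size $2^{o(n)}$. The key quantitative claim is that when the order is drawn from this distribution, the expected number of guessed (non-forced) variables is at most $(1 - \lambda_k) n + o(n)$, matching the bound achieved by a uniformly random order. To prove this, for each frozen variable $v$ I attach the simplified frozen tree: the event ``$v$ is forced at position $p$'' depends only on the relative order of $v$ and at most $\tau - 1$ other variables appearing in the tree, so $\tau$-wise independence reproduces the marginal forcing probabilities of the uniform distribution up to $(1 + o(1))$ factors, giving per-variable forcing probability $\lambda_k - o(1)$. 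By an averaging argument over the (polynomially-or-subexponentially sized) support, I can deterministically pick one ordering whose guessed set has size at most the expected value plus $o(n)$, then exhaustively try all $2^{(1-\lambda_k)n + o(n)}$ truth assignments to the guessed variables.

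For the general case, I invoke the existence claim stated in the introduction: there is a ``good'' order of variables and a sequence of fixings along that order such that afterwards at most roughly $\delta n$ variables remain liquid. Such an order exists by a deterministic entropy/greedy argument applied to the $2^{\delta n + o(n)}$ solutions (at each step, fix the next variable to the more frequent value among solutions consistent with prior fixings; on average each step reveals $1 - \delta + o(1)$ bits of information, so only $\delta n + o(n)$ variables can remain undetermined across all solutions). Since the algorithm knows neither the set nor the values, it enumerates: for each $\ell \in \{0, 1, \dots, n\}$, each subset $L \subseteq [n]$ with $|L| = \ell$, and each truth assignment to $L$, it substitutes into $F$ and runs the derandomized Unique PPSZ from the previous paragraph on the residual $(n - \ell)$-variable formula, treating it as a Unique instance. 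For the correct choice of $\ell = \delta n + o(n)$, $L$, and assignment, a solution is produced. The total running time is dominated by $\ell = \delta n$ and equals
\[
\binom{n}{\delta n} \cdot 2^{\delta n} \cdot 2^{(1 - \lambda_k)(1 - \delta) n + o(n)} = 2^{\left(\rho(\delta) + \delta + (1 - \lambda_k)(1 - \delta)\right) n + o(n)} = 2^{(1 - \lambda_k + \lambda_k \delta + \rho(\delta)) n + o(n)}.
\]

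The main obstacle I expect is the $\tau$-wise independent frozen-tree analysis: establishing that low-wise independence over the ordering suffices to recover the forcing probability $\lambda_k$ requires isolating, for each frozen $v$, a bounded set of ``relevant'' other variables whose positions determine whether $v$ is forced, and then controlling the error in approximating the uniform law on this bounded set by a $\tau$-wise independent one. A secondary technical point is formalizing the greedy ``good order'' reduction in the general case; fortunately the algorithm only needs its \emph{existence}, since the outer enumeration over $L$ and its assignment removes any need to construct it explicitly, and the sweep over $\ell$ lets the algorithm remain agnostic to $\delta$.
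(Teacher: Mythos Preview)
Your proposal follows the same two-stage architecture as the paper: derandomize the Unique case via a $\tau$-wise independent family of orderings combined with the frozen-tree forcing bound (the paper's Main Lemma), then reduce General to Unique by enumerating, for each $\ell$, all $\binom{n}{\ell}2^{\ell}$ partial assignments of size $\ell$ and running the Unique solver on the residual $(n-\ell)$-variable formula, with a sweep over $\ell$ so that $\delta$ need not be known. Your running-time calculation $\binom{n}{\delta n}\cdot 2^{\delta n}\cdot 2^{(1-\lambda_k)(1-\delta)n+o(n)}=2^{(1-\lambda_k+\lambda_k\delta+\rho(\delta))n+o(n)}$ matches the paper's exactly.

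One correction in the reduction step: your justification of the existence claim is backwards. Fixing each liquid variable to the \emph{more} frequent value does not halve the solution count, and an entropy argument alone cannot bound the number of liquid variables by $\delta n$ (a formula with two complementary solutions has $S=2$ but all $n$ variables liquid). The paper's Lemma~4.1 instead repeatedly picks a liquid variable and fixes it to the value yielding \emph{fewer} (but still $\ge 1$) solutions; this at least halves $S$ at every step, so after $\lceil\log S\rceil=\delta n+o(n)$ fixings no liquid variable remains and the residual formula is uniquely satisfiable. With this one-line fix your reduction goes through, and since your algorithm already enumerates all subsets and all assignments, only the existence of such a partial assignment is needed.
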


This matches the upper bound $2^{(1 - \lambda_k)n + o(n)}$ of the randomized PPSZ algorithm when the formula has $2^{o(n)}$ solutions.
Our algorithm and theorem also have a nice byproduct:
It is faster than the current best deterministic $k$-SAT algorithm \cite{DBLP:conf/icalp/Liu18} when the formula has \emph{moderately exponential} number of solutions.
For example, for $3$-SAT with at most $2^{n / 480}$ solutions and $4$-SAT with at most $2^{n/361}$ solutions, our deterministic algorithm is currently the fastest.

\section{Preliminaries}\label{pre}

We begin with some basic notations and definitions,
then we review the PPSZ algorithm and formalize it under our framework.

\subsection{Notations}

The formula is in Conjunctive Normal Form (CNF).
Let $V$ be a finite set of Boolean \emph{variables} each taking value from $\{0, 1\}$.
A \emph{literal} $l$ over $x \in V$ is either $x$ or $\bar{x}$, and $V(l)$ is used to denote the variable $x$ corresponding to $l$.
A \emph{clause} $C$ over $V$ is a finite set of literals over distinct variables from $V$. We use $V(C)$ to denote the set of all variables in $C$.
A \emph{formula} $F$ is a finite set of clauses,
and is a \emph{$k$-CNF} if every clause in $F$ contains at most $k$ literals.
We use $V(F)$ to denote the set of all variables in $F$.
If the context is clear, we omit $F$ and only use $V$ to denote $V(F)$.

An \emph{assignment} $\alpha$ is a finite set of literals over distinct variables.
Let $V(\alpha)$ be $\{V(l) \mid l \in \alpha\}$,
then $\alpha$ is called a \emph{complete assignment} of formula $F$ if $V(F) \subseteq V(\alpha)$ and is called a \emph{partial assignment} (usually denoted by $a$ to distinguish from $\alpha$) of $F$ if otherwise,
and we call them just an \emph{assignment} if the context is well-understood.
A literal $x$ (resp. $\bar{x}$) is \emph{satisfied} by $\alpha$ if $x \in \alpha$ (resp. $\bar{x} \in \alpha$).
A clause $C$ is \emph{satisfied} by $\alpha$ if $C$ contains a literal satisfied by $\alpha$, otherwise $C$ is \emph{falsified} by $\alpha$ if additionally $V(C) \subseteq V(\alpha)$.
A formula $F$ is \emph{falsified} by $\alpha$ if there is a clause in $F$ falsified by $\alpha$.
Otherwise, $F$ is \emph{satisfied} by $\alpha$ if $\alpha$ satisfies all the clauses of $F$, and if additionally $V(F) = V(\alpha)$ then we call $\alpha$ a \emph{solution} of $F$.

We use $\text{sat}(F)$ to denote the set of all the solutions of $F$ and let $S(F) \coloneqq |\text{sat}(F)|$.
We also use $S$ to denote the number of solutions of the original input formula.
The \emph{$k$-SAT} problem is to decide whether a given $k$-CNF $F$ is \emph{satisfiable} (having at least one solution) and output one if it has.

Given a CNF $F$ and a literal $l$, we use $F_l$ to represent the CNF by
deleting the literal $\bar{l}$ in all clauses of $F$ and deleting the clauses of $F$ that contain $l$.
In general, we use $F_{\alpha}$ to denote the CNF by doing the above on $F$ for all the literals in (partial) assignment $\alpha$.
Obviously, if this creates an empty clause (denoted by $\bot$) then $\alpha$ falsifies this clause and also falsifies $F$, and if this leaves no clause in $F$ then $\alpha$ satisfies $F$.

We will assume in the rest of the paper that the input formula $F$ is a satisfiable $k$-CNF with $k \ge 3$ a fixed integer, then $\lambda$ is used to denote the $\lambda_k$ defined in Theorem~\ref{thm_main}.
Let $n$ be the number of variables in the input formula $F$, we assume that $F$ has $|F| = \text{poly}(n)$ clauses.
We use $\log$ to denote the base-two logarithm,
and use $\widetilde{O}(f(n)) = 2^{o(n)} \cdot f(n)$ to suppress sub-exponential factors. 

\subsection{The PPSZ Algorithm}\label{subsec_ppsz}

In this section, we review the (randomized) PPSZ algorithm, the key definitions, and some previous results under our framework, for the purpose of derandomization.

A modified PPSZ algorithm is presented (Algorithm~\ref{alg_PPSZ} and Algorithm~\ref{alg_modify}), which is slightly different from the original version \cite{DBLP:journals/jacm/PaturiPSZ05} as well as its variants \cite{hertli20143, DBLP:conf/coco/SchederS17}.
The algorithm relies on the following concept:

\begin{definition}[\cite{hertli20143}]\label{def_tau_imply}
    Let $F$ be a CNF, a literal $l$ is \emph{implied} by $F$ if $l \in \bigcap_{\alpha \in \text{sat}(F)} \alpha$.
    Let $\tau$ be a positive integer,
    a literal $l$ is \emph{$\tau$-implied} by $F$ if there exists a CNF $J \subseteq F$ with $|J| \le \tau$ such that $J$ implies $l$.
\end{definition}

The PPSZ algorithm outlined in Algorithm~\ref{alg_PPSZ} is randomized, but its subroutine \textsf{Modify} (Algorithm~\ref{alg_modify}) is deterministic as long as its line~\ref{line_find_implied} is deterministic, which we now specify:

\begin{remark}\label{imply_to_resolution}
    To find a $\tau$-implied literal $l$ in $F_a$ with $V(l) = x$,
    for each $J \subseteq F_a$ with $|J| \le \tau$, compute $\text{sat}(J)$ and check whether the intersection contains $l$,
    which can be done deterministically in time $O({|F|}^{\tau} \cdot 2^{k \tau}) = n^{O(\tau)}$.
    \footnote{The time bound here is obtained by a naive enumeration. This is different from the method of \emph{bounded resolution} in \cite{DBLP:journals/jacm/PaturiPSZ05}. The name \textsf{Modify} also comes from there.}
\end{remark}

\begin{algorithm}[h]
\caption{$\textsf{PPSZ}(F, \Sigma, \tau)$}
\label{alg_PPSZ}
\begin{algorithmic}[1]
\REQUIRE $k$-CNF $F$, set $\Sigma$ of permutations on $V$, integer $\tau$
\ENSURE $\bot$ or solution $\alpha$
\STATE choose a permutation $\sigma$ from $\Sigma$ uniformly at random \label{line_uni_per}
\STATE choose a bit vector $\beta$ from $\{0, 1\}^{n}$ uniformly at random  \label{line_uni_bit}
\RETURN $\textsf{Modify}(F, \sigma, \beta, \tau)$
\end{algorithmic}
\end{algorithm}

\begin{algorithm}[h]
\caption{$\textsf{Modify}(F, \sigma, \beta, \tau)$}
\label{alg_modify}
\begin{algorithmic}[1]
\REQUIRE $k$-CNF $F$, permutation $\sigma$, bit vector $\beta$, integer $\tau$
\ENSURE $\bot$ or solution $\alpha$
\STATE initialize assignment $a$ as an empty set
\FOR {each $x \in V$ in the order of $\sigma$} \label{line_loop_begin}
    \IF {all bits in $\beta$ have been exhausted}{
        \RETURN $\bot$
    }
    \ENDIF
    \IF {$F_{a}$ contains a $\tau$-implied literal $l$ with $V(l) = x$}{ \label{line_find_implied}
        \STATE add $l$ to $a$ \label{line_forced}
    }
    \ELSE {
        \STATE set $l$ to $x$ if the next bit of $\beta$ is $1$ and to $\bar{x}$ if otherwise, add $l$ to $a$ \label{line_guessed}
    }
    \ENDIF
\ENDFOR \label{line_loop_end}
\STATE \textbf{if} $a$ satisfies $F$ \textbf{then} \textbf{return} $a$ as $\alpha$, otherwise \textbf{return} $\bot$
\end{algorithmic}
\end{algorithm}

With the algorithms well defined, we can formalize the \emph{success probability} of PPSZ:
\begin{definition}\label{def_U}
    Given $k$-CNF $F$, a set $\Sigma$ of permutations on $V$, and integer $\tau$, define
    \begin{equation*}
        \Pr[\text{Success}] \coloneqq \Pr[\textsf{PPSZ}(F, \Sigma, \tau) \in \text{sat}(F)] = \Pr_{\sigma \sim U_{\Sigma}, \beta \sim U_n}[\textsf{Modify}(F, \sigma, \beta, \tau) \in \text{sat}(F)],
    \end{equation*}
    where probability distributions $U_{\Sigma}: \Sigma \mapsto [0, 1]$, $U_n: \{0,1\}^n \mapsto [0,1]$ are uniform distributions.
\end{definition}
With Definition~\ref{def_U},
we are now ready to state the main result for the randomized PPSZ algorithm:
\begin{theorem}[\cite{DBLP:journals/jacm/PaturiPSZ05, hertli20143}]\label{thm_old_ppsz}
    If $\Sigma = \text{Sym}(V)$ and $\tau = \log n$, then $\Pr[\text{Success}] \ge 2^{-(1 - \lambda) n - o(n)}$.
\end{theorem}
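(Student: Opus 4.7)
The plan is to lower bound $\Pr[\text{Success}]$ by choosing a reference solution $\alpha^*$ and estimating the probability that $\textsf{PPSZ}$ outputs exactly $\alpha^*$. For fixed $\alpha^* \in \text{sat}(F)$ and $\sigma \in \text{Sym}(V)$, let $G(\sigma, \alpha^*)$ denote the number of variables that get guessed (rather than forced) by \textsf{Modify} when the partial assignment built so far always agrees with $\alpha^*$. Since $\beta$ is a uniform bit string with fresh bits consumed one per guess, the algorithm returns $\alpha^*$ if and only if every guess matches $\alpha^*$, giving $\Pr_\beta[\textsf{Modify}(F,\sigma,\beta,\tau) = \alpha^*] = 2^{-G(\sigma,\alpha^*)}$. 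Convexity of $t \mapsto 2^{-t}$ then yields
\begin{equation*}
    \Pr[\textsf{Modify} = \alpha^*] = \E_\sigma\bigl[2^{-G(\sigma,\alpha^*)}\bigr] \ge 2^{-\E_\sigma[G(\sigma,\alpha^*)]}.
\end{equation*}
Taking $\alpha^*$ uniform in $\text{sat}(F)$ and applying Jensen a second time, it suffices to prove $\E_{\alpha^*,\sigma}[G(\sigma,\alpha^*)] \le (1 - \lambda)n + o(n)$, since $\Pr[\text{Success}] = |\text{sat}(F)| \cdot \E_{\alpha^*}[\Pr[\textsf{Modify} = \alpha^*]] \ge |\text{sat}(F)| \cdot 2^{-(1-\lambda)n - o(n)}$ and $|\text{sat}(F)| \ge 1$.

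For a \emph{frozen} variable $x$ (one whose value is constant across $\text{sat}(F)$), I would bound $\Pr_\sigma[x \text{ guessed}]$ using the \emph{critical clause tree} of \cite{DBLP:journals/jacm/PaturiPSZ05}. Because $x$ is frozen, flipping $x$ in $\alpha^*$ violates some clause of $F$, and after the bounded-resolution preprocessing there is a clause $C$ of size $\le k$ containing $\alpha^*$'s literal on $x$ that becomes a unit on $x$ once the remaining variables of $C$ are set according to $\alpha^*$. Recursively growing such critical clauses for each new variable for up to $\tau = \log n$ levels yields a tree of branching factor $\le k - 1$ rooted at $x$; if the variables labelling some cut of this tree all precede $x$ in $\sigma$, the matching unit clause appears in $F_a$ and $x$ is $\tau$-implied, hence forced. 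A direct computation over the order statistics of a uniformly random permutation, as in \cite{DBLP:journals/jacm/PaturiPSZ05}, bounds the probability that no such cut precedes $x$ by $1 - \lambda + o(1)$, where the choice $\tau = \log n$ absorbs the tree-truncation error into the $o(1)$ term and the series $\lambda_k = \sum_{j \ge 1} \frac{1}{j(kj - j + 1)}$ emerges from evaluating the limiting integral.

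The main obstacle is the treatment of the \emph{liquid} (non-frozen) variables, for which no critical clause need exist and the frozen bound above fails. Following \cite{hertli20143}, I would exploit the randomness of $\alpha^*$ itself: whenever a liquid variable $x$ is guessed while building toward $\alpha^*$, the partial assignment $a$ so far is consistent with at least two solutions that differ on $x$, and a careful coupling between these twin solutions shows that the expected contribution of liquid variables to $G$ is compensated by the factor $|\text{sat}(F)|$ in the summation $\Pr[\text{Success}] = \sum_{\alpha^*} \Pr[\textsf{Modify} = \alpha^*]$. Combining with the frozen-variable estimate yields $\E_{\alpha^*,\sigma}[G] \le (1 - \lambda)n + o(n)$ and hence the theorem. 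The hardest step is precisely this liquid-variable accounting: a naive union bound over solutions is lossy, and the solution-coupling of \cite{hertli20143} is the technical ingredient that enables the General case.
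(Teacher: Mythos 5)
Your opening steps agree with how the paper frames this statement: the identity $\Pr[\text{Success}] = \sum_{\alpha \in \text{sat}(F)} \E_{\sigma}[2^{-G(\alpha,\sigma)}]$ is Equality~(\ref{ppsz_prob}), and your frozen-variable estimate via critical clause trees is the standard PPSZ argument, which would indeed give $\Pr_\sigma[x \text{ guessed}] \le 1-\lambda+o(1)$ for every frozen $x$ and settle the Unique case. Note, though, that the paper does not prove Theorem~\ref{thm_old_ppsz} itself; it quotes it from \cite{DBLP:journals/jacm/PaturiPSZ05, hertli20143} and only records Equality~(\ref{ppsz_prob}) together with a pointer to \cite{DBLP:conf/coco/SchederS17}, where the right-hand side is lower bounded by comparing two suitably chosen probability distributions on $\text{sat}(F)\times\Sigma$ and applying Jensen's inequality there --- not with the uniform distribution on solutions.

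The General case of your sketch has a genuine gap precisely at that point. After the second application of Jensen with $\alpha^*$ uniform on $\text{sat}(F)$ and the bound $|\text{sat}(F)|\ge 1$, you reduce the theorem to the claim $\E_{\alpha^*,\sigma}[G(\sigma,\alpha^*)] \le (1-\lambda)n+o(n)$. That claim is not available and cannot be the route to the theorem: as recalled in \S\ref{sec_remark}, \cite{DBLP:conf/coco/SchederS17} construct satisfiable formulas on which the expected number of guessed variables is at least $(1-\lambda+\theta)n$ for a constant $\theta>0$, so bounding the expectation of $G$ by $(1-\lambda)n+o(n)$ fails in general, and any correct argument must exploit the multiplicity of solutions rather than discard it. Your later remark that the liquid variables are ``compensated by the factor $|\text{sat}(F)|$'' is inconsistent with your own reduction, since that factor was already thrown away via $|\text{sat}(F)|\ge 1$; to use it you would need a quantitative statement such as $\E_{\alpha^*,\sigma}[G]\le(1-\lambda)n+\log|\text{sat}(F)|+o(n)$, or a non-uniform distribution on $\text{sat}(F)\times\Sigma$ as in \cite{DBLP:conf/coco/SchederS17}, and establishing such a statement is exactly the hard technical content of \cite{hertli20143} that your ``coupling between twin solutions'' only gestures at. As written, the proposal establishes the Unique case but not the stated theorem.
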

In the rest of the paper, fix $\tau = \log n$ and omit the parameter $\tau$ in the algorithms.

By Remark~\ref{imply_to_resolution}, $\textsf{Modify}(F, \sigma, \beta)$ runs in time $O(n \cdot |F| \cdot n^{O(\log n)}) = \widetilde{O}(1)$.
Therefore, by Theorem~\ref{thm_old_ppsz} and a routine argument, we obtain a randomized algorithm for $k$-SAT with one-sided error, whose upper bound of the running time is $\widetilde{O}(2^{(1 - \lambda) n})$.

To showcase the proof of Theorem~\ref{thm_old_ppsz}, and more importantly, to motivate our derandomization, we need the following key definitions:
\begin{definition}\label{def_step_alpha_x}
    We call each iteration of the loop (lines~\ref{line_loop_begin}-\ref{line_loop_end}) in $\textsf{Modify}(F, \sigma, \beta)$ a \emph{step}.
    For any variable $x \in V$, let $a(x)$ be the partial assignment  $a$ at the beginning of step $i$ where $\sigma(i) = x$.
\end{definition}
\begin{definition}[\cite{hertli20143, DBLP:conf/coco/SchederS17}]\label{def_frozen_liquid_forced_guessed}
    A literal $l$ is \emph{frozen} in $F$ if $l$ is implied by $F$.
    A variable $x$ is \emph{frozen} in $F$ if the literal $x$ or $\bar{x}$ is frozen in $F$, otherwise $x$ is \emph{liquid} in $F$. \footnote{In the literature, frozen variables are also called \emph{critical variables} or \emph{backbones} \cite{biere2009handbook}.}

    A literal $l$ is \emph{forced} if $l$ is $\tau$-implied by $F_{a(x)}$ such that $V(l) = x$.
    A variable $x$ is \emph{forced} if the literal $x$ or $\bar{x}$ is forced, otherwise $x$ is \emph{guessed}.
\end{definition}


\begin{definition}\label{def_LG}
    Given any assignment $\alpha \in \text{sat}(F)$ and variable $x \in V$, in the execution of $\textsf{Modify}(F, \sigma, \beta)$ that returns $\alpha$, indicator $G_x(\alpha, \sigma)$ is $1$ if and only if $x$ is guessed,
    and let $G(\alpha, \sigma) \coloneqq \sum_{x \in V} G_x(\alpha, \sigma)$.
\end{definition}

By an induction on steps, $\textsf{Modify}(F, \sigma, \beta)$ returns $\alpha$ if and only if all the guessed variables are \emph{correctly guessed} to take the corresponding literals in $\alpha$.
Thus by Definition~\ref{def_U}, it is easy to see that
\begin{equation}\label{ppsz_prob}
    \Pr[\text{Success}] =  \sum_{\alpha \in \text{sat}(F)} \E_{\sigma \sim U_{\Sigma}} \left[ 2^{-G(\alpha, \sigma)} \right] .
\end{equation}

In \cite{DBLP:conf/coco/SchederS17}, it is shown that the right-hand side of Equality~(\ref{ppsz_prob}) is lower bounded by $2^{-(1 - \lambda) n - o(n)}$ by considering two probability distributions on $\text{sat}(F) \times \Sigma$ and applying Jensen's inequality, from which Theorem~\ref{thm_old_ppsz} is immediate.

\section{Derandomization for the Unique Case}\label{sec_framework}

Our derandomization of PPSZ (Algorithm~\ref{alg_dPPSZ}) for the Unique case is simple:
Enumerating all possible lengths $i$ (from $1$ to $n$) of bit vectors, all bit vectors $\beta$ in $\{0, 1\}^i$, and all permutations $\sigma$ in $\Sigma$ to run $\textsf{Modify}(F, \sigma, \beta)$.

\begin{algorithm}[h]
\caption{$\textsf{dPPSZ}(F, \Sigma)$}
\label{alg_dPPSZ}
\begin{algorithmic}[1]
\REQUIRE $k$-CNF $F$, set $\Sigma$ of permutations on $V$
\ENSURE solution $\alpha$
\FOR {each $i$ from $1$ to $n$} { \label{line_loop1}
    \FOR {each bit vector $\beta \in \{0, 1\}^i$} { \label{line_loop2}
        \FOR {each permutation $\sigma \in \Sigma$} {\label{line_loop3}
                \IF {$\textsf{Modify}(F, \sigma, \beta) \ne \bot$} {
                    \RETURN $\textsf{Modify}(F, \sigma, \beta)$
                }
                \ENDIF
        }
        \ENDFOR
    }
    \ENDFOR
}
\ENDFOR \label{line_loop1_end}
\end{algorithmic}
\end{algorithm}

We call each iteration of the outer loop (lines~\ref{line_loop1}-\ref{line_loop1_end}) in Algorithm~\ref{alg_dPPSZ} a \emph{round}.
By Theorem~\ref{thm_old_ppsz}, if $\Sigma = \text{Sym}(V)$ then $\Pr[\text{Success}] > 0$, thus $\textsf{dPPSZ}(F, \Sigma)$ returns a solution since the last round (round $n$) must find a solution.
However, it runs in time $\widetilde{O}(n! 2^n)$, which is even worse than a brute-force search.

The goal is to construct a small $\Sigma$ such that $\textsf{dPPSZ}(F, \Sigma)$ finds a solution in at most $q$ rounds for some reasonably small $q$.
Note that if $|\Sigma| = \widetilde{O}(1)$ and
\begin{equation}\label{set_q}
    q = (1 - \lambda) n + o(n),
\end{equation}
then $\textsf{dPPSZ}(F, \Sigma)$ runs in time
\begin{equation}\label{eq_running_time}
    \widetilde{O}(\sum_{i \in [q]} 2^i \cdot |\Sigma|) = \widetilde{O}(2^q) = \widetilde{O}(2^{(1 - \lambda)n} )
\end{equation}
as desired. In the rest of this section, we shall fix $q$ as the value in the right-hand side of Equality~(\ref{set_q}).


Let $\alpha$ be the only element in $\text{sat}(F)$.
The observation is that
if there exists a $\sigma \in \Sigma$ such that $G(\alpha, \sigma) \le q$,
then enumerating all $\sigma$ and all bit vectors of length $q$
guarantees to find $\alpha$,
because the number of guessed variables, or equivalently, the number of used bits in $\beta$ in the execution of $\textsf{Modify}(F, \sigma, \beta)$ that returns $\alpha$ is at most $q$.
It remains to find such a $\Sigma$ with acceptable size.
We shall need the following definition:
\begin{definition}\label{def_enumerable}
    A permutation set $\Sigma$ is \emph{enumerable} if each permutation in $\Sigma$ is on $V$, $|\Sigma| = \widetilde{O}(1)$, and $\Sigma$ can be deterministically constructed in time $\widetilde{O}(1)$.
\end{definition}

In the rest of this section, we will prove the following main lemma (Lemma~\ref{lem_new_bits}), which implies the derandomization (Theorem~\ref{thm_unique}):
\begin{lemma}[Main Lemma]\label{lem_new_bits}
    If $F$ has exactly one solution then there exists an enumerable permutation set $\Sigma$ such that
    \begin{equation*}
        \E_{\sigma \sim U_{\Sigma}} \left[ G_x(\alpha, \sigma)\right] \le 1 - \lambda + o(1)
    \end{equation*}
    for any variable $x \in V$.
\end{lemma}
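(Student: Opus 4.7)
The plan is to build $\Sigma$ from a $\tau$-wise independent hash family. Fix a sufficiently large constant $c$, and let $\mathcal{H}$ be an explicit $\tau$-wise independent family of functions $h: V \to [n^c]$ (constructed, e.g., by polynomial interpolation over a finite field), so that $|\mathcal{H}| = n^{O(\tau)}$. For each $h \in \mathcal{H}$, let $\sigma_h \in \text{Sym}(V)$ be the permutation that orders the variables by increasing $h$-value, with ties broken by variable index, and set $\Sigma \coloneqq \{\sigma_h : h \in \mathcal{H}\}$. Since $\tau = \log n$, we have $|\Sigma| \le |\mathcal{H}| = n^{O(\log n)} = 2^{O(\log^2 n)} = \widetilde{O}(1)$, and since the family is constructible in time polynomial in its size, $\Sigma$ satisfies Definition~\ref{def_enumerable}.

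The bulk of the work is to show $\E_{\sigma \sim U_\Sigma}[G_x(\alpha, \sigma)] \le 1 - \lambda + o(1)$ for each variable $x$ (all of which are frozen in the unique-solution case). The starting point is Theorem~\ref{thm_old_ppsz}: under $\sigma \sim U_{\text{Sym}(V)}$ the probability that $x$ is forced is at least $\lambda - o(1)$. I would re-examine the underlying PPSZ / Scheder-Steinberger analysis \cite{DBLP:journals/jacm/PaturiPSZ05, DBLP:conf/coco/SchederS17} and extract the forcing lower bound in the form $\Pr[x \text{ forced}] \ge \sum_{P \in \mathcal{P}} w_P \cdot \Pr_\sigma[\text{all variables of } P \text{ appear before } x]$, where $\mathcal{P}$ is the collection of root-to-leaf paths in the frozen tree $T_x$ and the weights $w_P$ depend only on the structure of $T_x$. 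Crucially, each summand depends only on the relative order of the at most $\tau + 1$ specific variables in $P \cup \{x\}$, and $|\mathcal{P}| \le (k-1)^\tau = \text{poly}(n)$.

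With the bound in this form, transferring it to $U_\Sigma$ is straightforward. For any fixed path $P$, the probability under $h \sim U_\mathcal{H}$ that two variables in $P \cup \{x\}$ collide in hash value is at most $\binom{\tau + 1}{2} / n^c = o(1/n^2)$ for $c$ large enough, and conditioned on no collision the relative order of $P \cup \{x\}$ under $\sigma_h$ is exactly uniform by $\tau$-wise independence of $\mathcal{H}$. A union bound over the $\text{poly}(n)$ paths in $\mathcal{P}$ contributes only $o(1)$ total collision error, so $\Pr_{U_\Sigma}[x \text{ forced}] \ge \lambda - o(1)$, which is equivalent to the desired bound on $\E[G_x]$.

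The main obstacle is the reformulation of the PPSZ forcing lower bound in the per-path form above. The naive ``there is some all-before witness path'' event involves all the variables of $T_x$ and is therefore a function of up to $\text{poly}(n)$ variables, which is far more than $\tau$-wise independence of orderings can control directly. What is needed is a bound that is a weighted sum over individual paths, so that each summand only inspects $O(\tau)$ variables. I expect the cleanest route is to start from the Jensen-inequality step of \cite{DBLP:conf/coco/SchederS17}: their analysis already effectively picks out, under a random permutation, a single root-to-leaf placement whose position relative to $x$ is the only quantity entering the forcing calculation. Once that decomposition is extracted explicitly, the rest is the routine collision-plus-union-bound over the hash family sketched above.
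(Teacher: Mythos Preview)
Your construction of $\Sigma$ from a $\tau$-wise independent hash family is exactly right and matches the paper. The gap is in the analysis that follows.

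You have the size of the frozen tree wrong, and this misunderstanding drives you toward an unnecessary and problematic per-path decomposition. The $\tau$-frozen tree $T_x$ is built to have depth $d=\lfloor \log_k \tau\rfloor$, so it has at most $\sum_{i\le d}(k-1)^i\le k^d\le\tau$ vertices and hence at most $\tau$ distinct variable labels (this is Property~\ref{pp_l} in Definition~\ref{def_frozen_tree}). It does \emph{not} have $\mathrm{poly}(n)$ variables or depth $\tau$. Consequently, under a $\tau$-wise independent hash family the placements of \emph{all} labels in $T_x$ are already fully independent and uniform; you never need to isolate events that touch only a single path.

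With that observation, the paper's route is direct: the forcing event contains the event ``there is a \emph{cut} $A$ in $T_x$ all of whose labels have placement $<\gamma(x)$''. Because all $\le\tau$ placements are independent, one lower-bounds the cut probability by the recursion $\phi_j(r)\ge\bigl(r/n+(1-r/n)\phi_{j-1}(r)\bigr)^{k-1}$ (Lemma~\ref{lem_subtree}), using the FKG inequality to handle the positive correlation among subtrees sharing labels. Summing over $r$ and comparing to the Riemann integral recovers $\lambda-o(1)$.

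Your proposed route has two separate problems. First, the witness for forcing is a cut, not a path: the event is ``on \emph{every} root-to-leaf path some vertex precedes $x$'', a conjunction over paths of disjunctions, so a lower bound of the shape $\sum_P w_P\cdot\Pr[\text{all of }P\text{ before }x]$ does not arise naturally (and a single ``all-before'' path certifies nothing). Second, the Scheder--Steinberger Jensen step you invoke bounds the global quantity $\sum_{\alpha}\E[2^{-G(\alpha,\sigma)}]$; it does not produce a per-variable, per-path linear decomposition of the forcing probability, so the extraction you hope for is not there to be extracted.
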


\begin{theorem}\label{thm_unique}
    There exists a deterministic algorithm for Unique $k$-SAT that runs in time $2^{(1 - \lambda)n + o(n)}$.
\end{theorem}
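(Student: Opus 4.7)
The plan is to combine Lemma~\ref{lem_new_bits} with the enumeration structure of $\textsf{dPPSZ}$ via a standard averaging argument. First, I would invoke Lemma~\ref{lem_new_bits} to obtain an enumerable permutation set $\Sigma$ satisfying $\E_{\sigma \sim U_{\Sigma}}[G_x(\alpha, \sigma)] \le 1 - \lambda + o(1)$ for every $x \in V$, where $\alpha$ denotes the unique solution of $F$. By linearity of expectation,
\begin{equation*}
    \E_{\sigma \sim U_{\Sigma}}[G(\alpha, \sigma)] = \sum_{x \in V} \E_{\sigma \sim U_{\Sigma}}[G_x(\alpha, \sigma)] \le (1 - \lambda) n + o(n) = q,
\end{equation*}
and averaging over $\Sigma$ yields a specific permutation $\sigma^* \in \Sigma$ with $G(\alpha, \sigma^*) \le q$.

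Next, I would verify correctness of $\textsf{dPPSZ}(F, \Sigma)$ using the observation already flagged in the discussion above Definition~\ref{def_enumerable}: let $\beta^*$ be the bit vector of length $G(\alpha, \sigma^*)$ whose bits are the correct values of the guessed variables of $\alpha$ in the order imposed by $\sigma^*$. A step-by-step induction, using Definitions~\ref{def_step_alpha_x} and \ref{def_frozen_liquid_forced_guessed}, shows that $\textsf{Modify}(F, \sigma^*, \beta^*)$ returns $\alpha$: forced literals are added according to the same $\tau$-implication rule regardless of how $\beta^*$ is chosen on those steps, while the guessed bits of $\beta^*$ match $\alpha$ by construction. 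Since $|\beta^*| \le q$, the triple loop in Algorithm~\ref{alg_dPPSZ} visits the pair $(\sigma^*, \beta^*)$ by the end of round $q$, so the algorithm outputs a satisfying assignment (namely $\alpha$) no later than that round.

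Finally, I would read off the running time from Equation~(\ref{eq_running_time}). Enumerability of $\Sigma$ gives $|\Sigma| = \widetilde{O}(1)$ and a deterministic construction of $\Sigma$ in time $\widetilde{O}(1)$; each call to $\textsf{Modify}$ costs $\widetilde{O}(1)$ by Remark~\ref{imply_to_resolution}; and the bit-vector enumeration up to length $q$ contributes $\sum_{i=1}^{q} 2^i = O(2^q)$. Multiplying out yields total time $\widetilde{O}(2^q) = 2^{(1-\lambda) n + o(n)}$, as claimed. All the real content is absorbed into Lemma~\ref{lem_new_bits}; the only thing to be careful about is that \textsf{dPPSZ} halts the instant any nested call returns a non-$\bot$ value (which it does in Algorithm~\ref{alg_dPPSZ}), so the cost is truly dominated by the first $q$ rounds rather than by all $n$ rounds. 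Hence there is no substantive obstacle beyond citing the main lemma.
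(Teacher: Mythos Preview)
Your proposal is correct and follows essentially the same approach as the paper: invoke Lemma~\ref{lem_new_bits}, apply linearity of expectation to bound $\E_{\sigma}[G(\alpha,\sigma)]\le q$, extract a good $\sigma^*$ by averaging, and conclude via the observation preceding Definition~\ref{def_enumerable} together with the running-time calculation~(\ref{eq_running_time}). The paper's proof is terser (it defers the correctness of $\textsf{dPPSZ}$ to that prior observation rather than re-deriving it), but the content is identical.
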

\begin{proof}
    By Definition~\ref{def_enumerable}, we first construct an enumerable $\Sigma$ in $\widetilde{O}(1)$ time then call $\textsf{dPPSZ}(F, \Sigma)$.
    By Lemma~\ref{lem_new_bits} and linearity of expectation, we obtain $\E_{\sigma \sim U_{\Sigma}} \left[ G(\alpha, \sigma)\right] \le q$, which means that there exists a $\sigma \in \Sigma$ such that $G(\alpha, \sigma) \le q$.
    The theorem follows from the observation in the previous discussion.
\end{proof}

We start with constructing an enumerable permutation set in \S{\ref{subsec_sigma}}, which will be the required $\Sigma$ in Lemma~\ref{lem_new_bits}.
To proceed with our proof, we introduce our vital combinatorial structure called the \emph{frozen tree} in \S{\ref{subsec_dft}}.
After that, in \S{\ref{subsec_lem_unique}} we finish the proof of Lemma~\ref{lem_new_bits}.

\subsection{A Small Permutation Set}\label{subsec_sigma}

We shall use a $K$-wise independent hash family to construct our enumerable permutation set.
First of all, we review the basic definition:
\begin{lemma}[cf. \S{3.5.5} in \cite{DBLP:journals/fttcs/Vadhan12}]\label{lem_k_wise}
    For $N, M, K \in \mathbb{N}$ such that $K \le N$, a family of functions $H = \{h: [N] \mapsto [M]\}$ is \emph{$K$-wise independent} if for all distinct $x_1, \dots, x_K \in [N]$, the random variables $h(x_1), \dots h(x_K)$ are independent and uniformly distributed in $[M]$ when $h$ is chosen from $H$ uniformly at random.
    The size of $H$ and the time to deterministically construct $H$ can be
    $\text{poly}((\max\{M, N\})^K)$.
\end{lemma}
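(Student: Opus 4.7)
The plan is to use the classical polynomial-evaluation construction of $K$-wise independent hash families. The idea is that a uniformly random degree-$(<K)$ polynomial over a finite field $\mathbb{F}_q$ takes any $K$ distinct evaluation points to an independent uniform $K$-tuple in $\mathbb{F}_q$, and this can be post-composed with a balanced surjection $\pi \colon \mathbb{F}_q \to [M]$ to land uniformly and $K$-wise independently in $[M]$.

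Concretely, I would first fix a prime power $q \ge \max\{M, N\}$ with $M \mid q$ (immediate when $M$ is a prime power, by taking $q$ a sufficiently high power of the same prime), embed $[N]$ into $\mathbb{F}_q$ by any injection, and fix any $\pi$ with $|\pi^{-1}(y)| = q/M$ for every $y \in [M]$. The family is then $H = \{h_{\vec{a}} : \vec{a} \in \mathbb{F}_q^K\}$, where $h_{\vec{a}}(x) = \pi\bigl(\sum_{i=0}^{K-1} a_i x^i\bigr)$. Each $h_{\vec{a}}$ is specified by $K$ field elements, so $|H| = q^K = (\max\{M, N\})^{O(K)}$, and the family can be enumerated in $\text{poly}(q^K)$ time, matching the size and construction-time bounds in the statement.

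The key step is showing $K$-wise independence. Lagrange interpolation says that for any $K$ distinct inputs $x_1, \ldots, x_K$ and any targets $z_1, \ldots, z_K \in \mathbb{F}_q$, exactly one degree-$(<K)$ polynomial over $\mathbb{F}_q$ interpolates them; hence when $\vec{a}$ is uniform in $\mathbb{F}_q^K$, the tuple $(p_{\vec{a}}(x_1), \ldots, p_{\vec{a}}(x_K))$ is uniform in $\mathbb{F}_q^K$. Coordinate-wise application of $\pi$, combined with the balancedness $|\pi^{-1}(y)| = q/M$, then gives $\Pr[h_{\vec{a}}(x_1) = y_1, \ldots, h_{\vec{a}}(x_K) = y_K] = (q/M)^K / q^K = 1/M^K$ for every target in $[M]^K$, which is $K$-wise independence.

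The main technicality, rather than a conceptual obstacle, is arranging $M \mid q$ when $M$ is not itself a prime power. I would handle this by factoring $M = \prod_j M_j$ into prime powers and taking $H$ to be a CRT product family: choose an independent seed for the above construction for each $M_j$-valued component, and output $(h_1(x), \ldots, h_\ell(x))$ identified with $[M]$ via the Chinese Remainder Theorem. Independence of the seeds promotes per-component $K$-wise independence to joint $K$-wise independence on $[M]$, and the size bound $\prod_j (\max\{M_j, N\})^{O(K)} \le (\max\{M, N\})^{O(K)}$ is preserved, completing the plan.
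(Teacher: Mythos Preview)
The paper does not prove this lemma at all; it is quoted as a standard fact with a pointer to Vadhan's survey, where the construction is exactly the polynomial-evaluation family you describe (stated there for $M,N$ powers of two). So your core construction---random degree-$(<K)$ polynomials over $\mathbb{F}_q$ followed by a balanced surjection---is precisely the intended one, and your interpolation argument for $K$-wise independence is correct.

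There is, however, a genuine gap in your CRT step for general $M$. Your final inequality
\[
\prod_j \bigl(\max\{M_j,N\}\bigr)^{O(K)} \;\le\; \bigl(\max\{M,N\}\bigr)^{O(K)}
\]
is false when $M$ has many distinct prime factors and $N$ is large. Concretely, take $M=N$ equal to the product of the first $\ell$ primes; then every $M_j\le N$, so the left-hand side is at least $N^{\ell K}$, while the right-hand side is $N^{cK}$ for a fixed constant $c$. Since $\ell=\Theta(\log M/\log\log M)$ is unbounded, no constant $c$ suffices, and the product family has size $(\max\{M,N\})^{\omega(K)}$, violating the stated $\mathrm{poly}((\max\{M,N\})^K)$ bound.

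This does not affect the paper's use of the lemma: there $M=N=n$, and one can simply take $M'=2^{\lceil\log n\rceil}$ (a prime power, so your first construction applies directly) and let the placements live in $[M']$ rather than $[n]$; the tie-breaking, the collision bound $\Pr[\gamma(x)=\gamma(y)]\le 1/n$, and the Riemann-sum estimate all go through unchanged. If you want the lemma for arbitrary $M$, you need a different reduction than per-prime CRT---but for this paper, restricting to prime-power $M$ (as in the cited reference) is enough.
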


We construct the permutation set $\Sigma$ by $\textsf{Construct-$\Sigma$}(V)$ (Algorithm~\ref{alg_construct_sigma}).
The parameters in the algorithm are set with anticipation of what we will do in the later analysis.
\begin{algorithm}[h]
\caption{$\textsf{Construct-}\Sigma(V)$}
\label{alg_construct_sigma}
\begin{algorithmic}[1]
\REQUIRE variable set $V$
\ENSURE permutation set $\Sigma$
\STATE initialize $\Sigma$ as an empty set, $n$ as $|V|$, and $\tau$ as $\log n$
\STATE let $N \coloneqq n$, $M \coloneqq n$, and $K \coloneqq \tau$, then construct hash family $H$ as in Lemma~\ref{lem_k_wise} \label{set_alg4_par}
\STATE assign each $x \in V$ a distinct index $i(x) \in [n]$
\FOR {each $h \in H$} {
    \FOR {each $x \in V$} {\label{line_gamma_value_begin}
        \STATE set $\gamma(x) \coloneqq h \circ i(x)$
    }
    \ENDFOR\label{line_gamma_value_end}
    \STATE sort all $x \in V$ according to $\gamma(x)$ in ascending order, breaking ties by an arbitrary deterministic rule \label{line_sorting}
    \STATE let the sorted order of variables be $\sigma$ and add $\sigma$ to $\Sigma$
}
\ENDFOR
\RETURN $\Sigma$
\end{algorithmic}
\end{algorithm}

Lines~\ref{line_gamma_value_begin}-\ref{line_gamma_value_end} of Algorithm~\ref{alg_construct_sigma} define a function $\gamma: V \mapsto [n]$.  For any $x \in V$, $\gamma(x)$ is called the \emph{placement} of $x$.
Note that $\textsf{Construct-$\Sigma$}(V)$ returns a multiset $\Sigma$ since $|H| \ge n^{\tau}$ is greater than the number of all possible orders ${\tau}! \cdot \binom{n}{\tau}$, however we keep all the duplicates in $\Sigma$ for the following reason:
\begin{remark}\label{hash_to_permutation}
    $\textsf{Construct-$\Sigma$}(V)$ defines a bijection between $H$ and $\Sigma$, therefore choosing a $\sigma \in \Sigma$ uniformly at random is equivalent to choosing an $h \in H$ uniformly at random.
    For a $\sigma \in \Sigma$ chosen uniformly at random and all distinct variables $x_1, \dots, x_{\tau} \in V$, the random variables $\gamma(x_1), \dots, \gamma(x_{\tau})$ are independent and uniformly distributed in $[n]$.
\end{remark}

By line~\ref{set_alg4_par} of Algorithm~\ref{alg_construct_sigma}, Lemma~\ref{lem_k_wise}, and Remark~\ref{hash_to_permutation}, the permutation set returned by $\textsf{Construct-$\Sigma$}(V)$ is enumerable.

\subsection{The Frozen Tree}\label{subsec_dft}

Our frozen tree is different from the so-called \emph{critical clause tree} in \cite{DBLP:journals/jacm/PaturiPSZ05} for two main reasons.
Firstly, we are using $\tau$-implication in each step rather than bounded resolution as a preprocessing.
Secondly, we do not need the tree for General $k$-SAT.

Recall that a subset $A$ of vertices in a rooted tree is a \emph{cut} if it does not include the root and every path from the root to a leaf contains exactly one vertex in $A$.
For convenience, we introduce a dummy variable $\kappa \notin V$ and let $\alpha$ be the union of the literal set $\{\kappa\}$ with a solution of $F$.
\begin{definition}\label{def_frozen_tree}
    Given a positive integer $K$ and a variable $x \in V$, a rooted tree is a \emph{$K$-frozen tree} for $x$ if it has the following properties:
    \begin{enumerate}
        \item The root $u$ is labeled by $x$, and any other vertex is labeled by $\kappa$ or a variable in $V$.
            If $v$ is labeled by $y$, let $l(v)$ be the literal $l \in \alpha$ with $V(l) = y$.
            For any subset $W$ of vertices in the tree, let $V(W)$ be the set of the labels of all vertices in $W$. \label{pp_r}
        \item Any vertex has at most $k-1$ children. \label{pp_k}
        \item All vertices on the path $P(v)$ from the root to any vertex $v$ with labels different from $\kappa$ have distinct labels. \label{pp_p}
        \item Any leaf is at depth $d = \lfloor \log_k K \rfloor$. \label{pp_d}
        \item The number of different labels except $\kappa$ in the tree is at most $K$. \label{pp_l}
        \item For any cut $A$, let $\alpha(A) \coloneqq \{l(v) \mid v \in A \}$,
            then $l(u)$ is $K$-implied by $F_{\alpha(A)}$. \label{pp_c}
    \end{enumerate}
\end{definition}

In the rest of this section, we shall prove the following lemma:
\begin{lemma}\label{lem_fv_ft}
    For any frozen variable $x$ there exists a $\tau$-frozen tree for $x$.
\end{lemma}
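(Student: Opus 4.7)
The plan is to construct the frozen tree top-down via the standard \emph{critical clause} recipe, then verify the six defining properties of Definition~\ref{def_frozen_tree} in order. Set $d = \lfloor \log_k \tau \rfloor$, and start with the root $u$ labeled by $x$. At any vertex $v$ of depth less than $d$ with non-$\kappa$ label $y$, use that $y$ is frozen (throughout \S\ref{sec_framework} we are in the Unique case, where every variable is frozen) to pick a clause $C_v \in F$ containing $l(v)$ whose other literals are all falsified by $\alpha$---such a $C_v$ must exist, otherwise flipping the $y$-coordinate of $\alpha$ would produce a second solution. Writing $C_v = \{l(v), \bar l_1, \dots, \bar l_s\}$ with $s \le k-1$ and $l_i \in \alpha$, attach $s$ children to $v$, labeling the $i$-th child by $V(l_i)$, except that if $V(l_i)$ already appears on the root-to-$v$ path we instead label the child by $\kappa$ (this preserves property~\ref{pp_p}). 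Each $\kappa$-labeled vertex of depth less than $d$ is extended by a single $\kappa$-child so that every branch reaches depth $d$.

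Properties~\ref{pp_r}--\ref{pp_d} hold by construction. For property~\ref{pp_l}, the branching is at most $k-1$ and $d = \lfloor \log_k \tau \rfloor$, so the total number of vertices is at most $\sum_{i=0}^{d} (k-1)^i = O(\tau^{\log_k(k-1)}) = o(\tau)$, which bounds the number of distinct non-$\kappa$ labels and is at most $\tau$ for $\tau$ large enough.

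The main step is property~\ref{pp_c}. Fix any cut $A$ and let $J$ consist of the restrictions $\tilde C_v := C_v \mid_{\alpha(A)}$ over all non-$\kappa$ strict ancestors $v$ of vertices in $A$. Each $\tilde C_v$ survives in $F_{\alpha(A)}$ (the positive literal $l(v)$ is not in $\alpha(A)$ since $v \notin A$, and no $\bar l_i$ is in $\alpha(A)$ because $l_i \in \alpha$), so $J \subseteq F_{\alpha(A)}$ and $|J| \le \tau$. To show that $J$ implies $l(u)$, I would exhibit a resolution derivation: for each residual literal $\bar l_i$ appearing in some $\tilde C_v \in J$, the matching $l_i$ is supplied either by $\tilde C_{w_i} \in J$ when the $i$-th child $w_i$ of $v$ is a non-$\kappa$ vertex strictly above $A$ (note $l(w_i) = l_i$), or by $\tilde C_u \in J$ when $w_i$ had been relabeled $\kappa$ because $V(l_i)$ is the label $y_u$ of a strict ancestor $u$ of $v$ (such a $u$ is automatically above $A$, and $l(u) = l_i$). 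Resolving bottom-up through the subtree above the cut cancels every $\bar l_i$ and produces the unit clause $\{l(u)\}$, as required.

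The main obstacle is precisely this last step---making sure the $\kappa$-relabelings do not stall the resolution, since they break an otherwise purely tree-like structure. The crucial observation that saves the argument is that a $\kappa$-relabeling of a child of $v$ arises only when the ``missing'' variable $V(l_i)$ is the label of a proper ancestor of $v$; such an ancestor necessarily lies strictly above the cut $A$ and hence contributes its own critical clause to $J$, providing exactly the resolvent $l(u) = l_i$ needed to cancel $\bar l_i$ in $\tilde C_v$.
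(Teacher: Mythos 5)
Your construction is not the paper's: the paper's \textsf{Construct-Tree} chooses at each vertex $v$ a clause that is falsified by the assignment obtained from $\alpha$ by flipping \emph{all} variables on the root-to-$v$ path, whereas you always take the critical clause of the vertex's own variable $y$ with respect to $\alpha$ alone (a clause containing $l(v)$ with all other literals false under $\alpha$). That weaker choice only certifies that $y$ cannot be flipped in isolation, and it is not enough to give Property~\ref{pp_c}; this is exactly where your argument breaks. The resolution derivation you sketch stalls at the very $\kappa$-relabelings you flag: when the missing variable is the label of an ancestor $u'$, resolving $\tilde C_v$ with $\tilde C_{u'}$ reintroduces the negated literals of the path variables between $u'$ and $v$ (in the basic case the resolvent is a tautology), so the bottom-up cancellation never produces the unit clause $\{l(u)\}$; the derivation is circular --- to free $l(u)$ from $\tilde C_u$ you need the child's literal, and to free the child's literal from its own clause you need $l(u)$.

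The failure is semantic, not just presentational. Take $\alpha$ all-true, $k=3$, $d=2$, and suppose the critical clauses picked are $C_u=\{x,\bar y,\bar z\}$ at the root, $C_y=\{y,\bar x,\bar w\}$ at the child labeled $y$, and $C_z=\{z,\bar x,\bar w\}$ at the child labeled $z$ (perfectly legitimate critical clauses in a uniquely satisfiable formula, and your construction is free to pick them). Your tree then has a cut $A$ consisting of the four grandchildren: two labeled $w$ and two relabeled $\kappa$ (for the repeated variable $x$). Then $\alpha(A)$ contributes only the literal $w$, and your exhibited $J$ restricts to $\{x,\bar y,\bar z\}$, $\{y,\bar x\}$, $\{z,\bar x\}$, which is satisfied by $x=y=0$ and hence does \emph{not} imply $x$. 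So Property~\ref{pp_c} is not established for your tree (and in general your tree need not satisfy it), which is the heart of the lemma --- Properties~\ref{pp_r}--\ref{pp_l} are routine, as you note. The fix is the paper's clause-selection rule: since $F$ is uniquely satisfiable, the assignment flipping the whole path $P(v)$ falsifies some clause of $F$, and taking that clause as $C(v)$ makes the purely semantic induction over cuts (the paper's Lemmas~\ref{claim_ancestor} and~\ref{claim_cut}) go through, with no explicit resolution derivation needed.
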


As observed in \cite{hertli20143}, it is possible to extend the proof for the existence of a critical clause tree in \cite{DBLP:journals/jacm/PaturiPSZ05} to obtain a proof for Lemma~\ref{lem_fv_ft}.
Our proof here is simpler and self-contained, which might be of independent interest.

\begin{algorithm}[h]
\caption{\textsf{Construct-Tree}$(F, a, v, y, d)$}
\label{alg_construct_tree}
\begin{algorithmic}[1]
\REQUIRE $k$-CNF $F$, assignment $a$, tree vertex $v$, variable $y \in V$, integer $d$
\ENSURE a tree rooted at $v$
\STATE label $v$ by $y$ and replace $l$ in $a$ with $\bar{l}$ where $V(l) = y$ \label{line_update_a}
\IF {$d > 0$} {
    \STATE choose a clause $C(v) \in F$ falsified under $a$ \label{line_choose_clause}
    \FOR {each variable $y' \in V(C(v))$ such that $y' \notin V(P(v))$} { \label{line_create_child_w}
        \STATE create a child vertex $v'$ of $v$ and \textsf{Construct-Tree}$(F, a, v', y', d -1)$
    }
    \ENDFOR
    \IF {no child of $v$ is created} {\label{line_add_chi}
        \STATE create a child vertex $v'$ of $v$ and \textsf{Construct-Tree}$(F, a, v', \kappa, d -1)$
    }
    \ENDIF
}
\ENDIF
\RETURN the tree rooted at $v$
\end{algorithmic}
\end{algorithm}

    We call \textsf{Construct-Tree}$(F, \alpha, u, x, \lfloor \log_k \tau \rfloor)$ (Algorithm~\ref{alg_construct_tree}) to construct a tree rooted at $u$
    and prove that the constructed tree is a $\tau$-frozen tree for $x$ by proving all six properties in Definition~\ref{def_frozen_tree} true.

    We will frequently use the clause $C(v)$ associated to any vertex $v$ in the tree, which is guaranteed to exist since $a$ contains the wrong literal over $x$ (line~\ref{line_choose_clause}), which must be frozen since $F$ has exactly one solution.
    Properties~\ref{pp_r} and \ref{pp_d} trivially hold.
    Property~\ref{pp_k} holds for each vertex $v$ since the clause $C(v)$ has size at most $k$ and must be satisfied if no variable of it appears in $V(P(v))$, thus the loop in line~\ref{line_create_child_w} runs for at most $k-1$ times.
    Property~\ref{pp_p} follows directly from line~\ref{line_create_child_w}.
    By Properties~\ref{pp_k} and \ref{pp_d}, the number of vertices in the constructed tree is at most $\sum_{i=0}^d (k-1)^i \le k^d \le \tau$, then Property~\ref{pp_l} follows immediately.

    It remains to prove Property~\ref{pp_c} true. We need the following lemma:
    \begin{lemma}\label{claim_ancestor}
        Given a subtree rooted at a non-leaf vertex $v$ and $A$ the set of all children of $v$,
        for any assignment $\alpha_A \in \text{sat}(J_{\alpha(A)})$ where CNF $J = \{ C(w') \mid w' \in P(v) \}$,
        there exists an ancestor $w$ of $v$ such that $l(w) \in \alpha_A$.
        \footnote{Conventionally, a vertex is also considered an ancestor of itself. }
    \end{lemma}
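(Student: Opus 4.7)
The plan is to examine just the single clause $C(v) \in J$ and track what the restriction by $\alpha(A)$ does to it; no other clauses of $J$ will be needed. First I would pin down the literal pattern of $C(v)$: by line~\ref{line_choose_clause} it is a clause of $F$ falsified by the local assignment $a$ at vertex $v$, and at that moment $a$ contains $\bar{l(w_i)}$ for every ancestor $w_i \in P(v)$ (including $v$ itself) and agrees with $\alpha$ on every other variable. Since each literal $m \in C(v)$ must satisfy $\bar m \in a$, its sign is forced: $V(m) = V(w_i)$ for some ancestor yields $m = l(w_i)$, while $V(m) = V(w)$ for some child $w \in A$ yields $m = \bar{l(w)}$. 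Moreover $V(C(v)) \subseteq V(P(v)) \cup V(A)$, because the loop in line~\ref{line_create_child_w} turns every variable of $C(v)$ not already on $V(P(v))$ into the label of a child.

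Next I would compute $C(v)_{\alpha(A)}$. The restriction $\alpha(A) = \{l(w) : w \in A\}$ shares no literal with $C(v)$ (the child-labelled literals appear negated in $C(v)$), so $C(v)$ is not deleted by the restriction; what gets stripped off is exactly the set of $\bar{l(w)}$ with $w \in A$. Hence $C(v)_{\alpha(A)} = \{ l(w_i) : w_i \in P(v),\ V(w_i) \in V(C(v)) \}$ is a clause of $J_{\alpha(A)}$. This remnant is non-empty: if $V(C(v)) \cap V(P(v))$ were empty then $a$ would agree with $\alpha$ on all of $V(C(v))$, forcing $C(v)$ to be satisfied by $a$ (since $\alpha \in \text{sat}(F)$) and contradicting the choice of $C(v)$. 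Since $\alpha_A$ satisfies $J_{\alpha(A)}$ it satisfies this remnant clause, so $l(w) \in \alpha_A$ for some ancestor $w \in P(v)$, which is exactly the lemma.

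The only mildly delicate point will be interpreting $a$ as the state local to the recursive call at $v$ (so that only the flips along $P(v)$ are in effect, not any flips done in sibling subtrees), together with checking that the dummy-$\kappa$ branch is harmless: when $V(C(v)) \subseteq V(P(v))$ and $V(A) = \{\kappa\}$, the restriction leaves $C(v)_{\alpha(A)} = C(v)$ and the ancestor-literal argument applies verbatim.
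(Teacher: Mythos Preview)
Your proof is correct and follows essentially the same idea as the paper's: both isolate the single clause $C(v)\in J$, use that its variables lie in $V(P(v))\cup V(A)$ with the path-labelled literals equal to $l(w_i)$ and the child-labelled literals equal to $\bar{l}(w)$, and conclude that any $\alpha_A$ satisfying $J_{\alpha(A)}$ must pick up some ancestor literal. The paper packages the path flips into a set $\beta=\{\bar{l}(w'):w'\in P(v)\}$ and argues that $C(v)$ is falsified by $\alpha(A)\cup\beta$, whereas you explicitly compute the restricted clause $C(v)_{\alpha(A)}$ and verify it is a non-empty set of ancestor literals; these are two phrasings of the same argument.
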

    \begin{proof}
    Observe that $C(v) \in J_{\alpha(A) \cup \beta}$ is falsified, where $\beta$ is the set of all literals $\bar{l}$ in line~\ref{line_update_a} of the executions of \textsf{Construct-Tree}$(F, a, w', y, d)$ for all $w' \in P(v)$.
    Indeed, all the variables in clause $C(v)$ are from $V(A)$ and $V(P(v))$, whose corresponding literals are all falsified in $\alpha(A) \cup \beta$ by line~\ref{line_choose_clause}. So at least one literal in $\beta$ is not in $\alpha_A$, giving the lemma.
    \end{proof}

    To prove Property~\ref{pp_c}, we need to identify a CNF $J$ consisting of at most $\tau$ clauses such that $J$ implies $l(u)$.
    Let $T$ be the set of all vertices in the tree,
    we claim that the desired $J$ can be $\{ C(v) \mid v \in V(T) \}$, which consists of at most $\tau$ clauses since there are at most $\tau$ vertices in $T$.

    The remaining proof is by an induction on $d$.
    If $d = 1$ then the cut is the set of all children of the root $u$.
    Thus by Lemma~\ref{claim_ancestor}, the only ancestor $u$ suffices that $l(u)$ is in any solution of $J_{\alpha(A)}$, giving the lemma.
    Now suppose the lemma holds for $d = i$ and we prove it for $d = i + 1$.
    We shall use the following lemma:
    \begin{lemma}\label{claim_cut}
        Given a cut $A$, for any $\alpha' \in \text{sat}(J_{\alpha(A)})$, there exists a vertex set $\widetilde{A} \subseteq T$, such that $\alpha(\widetilde{A}) \subseteq \alpha' \cup \alpha(A)$ and any vertex in $\widetilde{A}$ has depth at most $i$. Furthermore, $\widetilde{A}$ is either a cut or contains $u$.
    \end{lemma}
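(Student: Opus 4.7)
The plan is to build $\widetilde{A}$ by pushing each depth-$(i+1)$ vertex of $A$ up to an ancestor at depth at most $i$, using Lemma~\ref{claim_ancestor} to guarantee that the lifted ancestor's literal lies in $\alpha' \cup \alpha(A)$. First I would partition $A = A_{\le i} \cup A_{i+1}$ (the tree has depth $i+1$, so $A_{i+1}$ consists of leaves) and observe that for any $v \in A_{i+1}$ with parent $p$, every sibling of $v$ must also lie in $A$: a sibling $v' \notin A$ would force the unique cut vertex on the root-to-leaf path through $v'$ to be an ancestor of $p$, which would then also lie on the path through $v$ and conflict with $v \in A$. Hence the entire set $A_p$ of children of $p$ is contained in $A$.

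Next I would apply Lemma~\ref{claim_ancestor} to each such depth-$i$ vertex $p$. Since $A_p \subseteq A$ and $J_p \coloneqq \{C(w') : w' \in P(p)\} \subseteq J$, the assignment $\alpha' \cup (\alpha(A) \setminus \alpha(A_p))$ satisfies $(J_p)_{\alpha(A_p)}$, so Lemma~\ref{claim_ancestor} supplies an ancestor $w(p)$ of $p$ with $l(w(p)) \in \alpha' \cup \alpha(A)$. If some $w(p) = u$, I take $\widetilde{A}$ to include $u$ and finish via the ``contains $u$'' clause; otherwise every $w(p)$ has depth in $[1, i]$, and I define $\widetilde{A}$ as the antichain obtained from $A_{\le i} \cup \{w(p) : p\}$ by removing any vertex strictly dominated by another in the ancestor order. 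Both $\alpha(\widetilde{A}) \subseteq \alpha' \cup \alpha(A)$ and the depth bound are then immediate.

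The main obstacle is verifying that this antichain is still a cut. The crucial observation is that no $b \in A_{\le i}$ can be a strict ancestor of any $w(p)$: otherwise $b$ would be an ancestor of $p$, and the root-to-leaf path through any child of $p$ would hit both $b$ and that child, each lying in $A$, contradicting the cut property of $A$. Hence every conflict between two vertices of $A_{\le i} \cup \{w(p) : p\}$ is resolved by removing the strictly deeper vertex, and coverage of every root-to-leaf path is preserved because any path through a removed descendant also passes through its retained ancestor. A short case analysis on which of $A_{\le i}$, $A_{i+1}$, or a pair of $w(p)$'s provides the cut vertex on a given path then confirms that $\widetilde{A}$ hits each root-to-leaf path exactly once.
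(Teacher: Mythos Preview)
Your argument is correct and follows essentially the same approach as the paper: lift each block of depth-$(i{+}1)$ leaves in $A$ to an ancestor via Lemma~\ref{claim_ancestor}, then assemble a cut from these ancestors together with $A_{\le i}$. The only cosmetic difference is that the paper does this iteratively (handling one parent $p$ at a time and maintaining the cut invariant by replacing all descendants of the found ancestor $w$ in $\widetilde{A}$ by $w$), whereas you perform all the lifts in one batch and then extract the maximal antichain; your extra observation that no $b \in A_{\le i}$ can strictly dominate any $w(p)$ is exactly what the iterative version gets for free from its invariant.
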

    \begin{proof}
    We shall construct $\widetilde{A}$ by the following process.
    Initialize $\widetilde{A}$ as $A$, we repeatedly modify $\widetilde{A}$ until it contains $u$ or any vertex in it has depth at most $i$ while keeping $\widetilde{A}$ a cut.
    Choose a vertex $v'$ in $\widetilde{A}$ at depth $i+1$, let $v$ be its parent and let $A'$ be the set of all children of $v$.
    Since no ancestor of $v$ is in $\widetilde{A}$ by the definition of a cut, it must be that  $A' \subseteq \widetilde{A}$.
    Since $\alpha'$ satisfies $J_{\alpha(A)}$, we have that $\alpha'$ also satisfies its subset $J'_{\alpha(A)}$ where $J' = \{ C(w') \mid w' \in P(v) \}$.
    Note that $\alpha(A) = \alpha(A') \cup \alpha(A \backslash A')$, thus $\alpha' \cup \alpha(A \backslash A')$ satisfies $J'_{\alpha(A')}$.
    By Lemma~\ref{claim_ancestor},
    there exists an ancestor $w$ of $v$ such that $l(w) \in \alpha' \cup \alpha(A \backslash A')$.
    If $w = u$ then we stop.
    Otherwise we replace all vertices with ancestor $w$ in $\widetilde{A}$ by $w$ to keep $\widetilde{A}$ a cut.
    Continue this process until there is no vertex in $\widetilde{A}$ at depth $i+1$.
    After the process, any vertex in the resulting $\widetilde{A}$ has depth at most $i$ and $\widetilde{A}$ is a cut if it does not contain $u$.
    Furthermore, any literal in $\alpha$ over a label from $\widetilde{A} \backslash A$ is in $\alpha' \cup \alpha(A \backslash A')$ for some $A' \subseteq A$, thus also in $\alpha' \cup \alpha(A)$. We conclude that $\alpha(\widetilde{A}) \subseteq \alpha' \cup \alpha(A)$, giving the lemma.
    \end{proof}

    By $u \notin A$ and Property~\ref{pp_p}, the label $x$ of $u$ does not appear in $A$, so either $l(u)$ or $\bar{l}(u)$ is in $\alpha'$ since $\alpha' \cup \alpha(A)$ is a solution of $J$ in which $x$ appears.
    If $\widetilde{A}$ contains $u$, then $l(u) \in \alpha'$ by Lemma~\ref{claim_cut}.
    Otherwise, we ignore all vertices below depth $i$ to obtain a tree with uniform depth $i$ and a cut $\widetilde{A}$ for it.
    Assume for contradiction that $\bar{l}(u) \in \alpha'$, then by Lemma~\ref{claim_cut} we have
    \begin{equation}\label{ineq_alpha}
        \alpha(\widetilde{A}) \cup \{\bar{l}(u)\} \subseteq \alpha' \cup \alpha(A) \cup \{\bar{l}(u)\} = \alpha' \cup \alpha(A).
    \end{equation}

    Since $\alpha' \in \text{sat}(J_{\alpha(A)})$, $J_{\alpha' \cup \alpha(A)}$ must be satisfiable.
    Thus by (\ref{ineq_alpha}), $J_{\alpha(\widetilde{A}) \cup \{\bar{l}(u)\} }$ is also satisfiable,
    contradicting with the induction hypothesis that any solution of $J_{\alpha(\widetilde{A})}$ contains $l(u)$.
    So it must be that $l(u) \in \alpha'$.
    Therefore, any solution of $J_{\alpha(A)}$ contains $l(u)$ and thus Property~\ref{pp_c} holds, completing the proof of Lemma~\ref{lem_fv_ft}.

\subsection{Proof of the Main Lemma}\label{subsec_lem_unique}

In this section, we prove Lemma~\ref{lem_new_bits}.
First of all we relate the event
$G_x(\alpha, \sigma) = 0$ for a (frozen) variable $x$, or equivalently, the event that $x$ is forced in $F_{a(x)}$, to an event in another probability space.
By Lemma~\ref{lem_fv_ft}, there exists a $\tau$-frozen tree for $x$.
So by Property~\ref{pp_c} in Definition~\ref{def_frozen_tree} and the fact that $\kappa$ does not appear in $F$,
if there exists a cut $A$ in this tree such that
all labeling variables of $A$ except $\kappa$ \emph{appear before} $x$ in the permutation $\sigma$ (denote this event by $B(\sigma)$),
then $x$ is $\tau$-implied by $F_{\alpha(A)}$,
which means that $x$ must be forced in $F_{a(x)}$ since $\alpha(A) \subseteq a(x)$ (cf. Definition~\ref{def_step_alpha_x}).
Thus
\begin{equation*}
    \Pr_{\sigma \sim U_{\Sigma}}[B(\sigma)] \le \Pr_{\sigma \sim U_{\Sigma}}[G_x(\alpha, \sigma) = 0] = 1 - \E_{\sigma \sim U_{\Sigma}}[G_x(\alpha, \sigma)] .
\end{equation*}
Therefore to prove Lemma~\ref{lem_new_bits}, it suffices to prove the following (for readability, omit the parameter $\sigma$ in $B$ and random variable $\sigma \sim U_{\Sigma}$ throughout this section):
\begin{equation}\label{ineq_B}
    \Pr[B] \ge \lambda - o(1).
\end{equation}

Recall from \S{\ref{subsec_sigma}} that the permutation $\sigma$ on $V$ is decided by the placement function $\gamma$: Variables are sorted in ascending order according to their placements (breaking ties arbitrarily, line~\ref{line_sorting} of Algorithm~\ref{alg_construct_sigma}).
Let $\widehat{B}$ be the event that there exists a cut $A$ in the tree such that all labeling variables of $A$ except $\kappa$ have \emph{strictly smaller} placements than $x$, then
\begin{equation}\label{ineq_bb}
    \Pr[B] \ge \Pr[\widehat{B}].
\end{equation}

With foresight for simplicity in the later analysis, we shall consider the following events:
\begin{definition}
    Given a $\tau$-frozen tree for $x$, for any integer $j \in [0, d]$, let $T_j$ be a subtree rooted at a vertex at depth $d-j$ and labeled $y \neq \kappa$, and let $\widetilde{B}_j$ be the event that there exists a cut $A$ in $T_j$ such that all labeling variables of $A$ except $\kappa$ have placements \emph{at most} $\gamma(y)$.
\end{definition}
\begin{lemma}\label{lem_B_Phi}
    $\Pr[\widehat{B}] \ge \Pr[\widetilde{B}_d] - o(1)$.
\end{lemma}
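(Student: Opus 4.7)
The plan is to observe that $\widehat{B}$ and $\widetilde{B}_d$ differ only in strict versus non-strict inequality: with $j = d$, the subtree $T_d$ is the whole $\tau$-frozen tree for $x$ (its root sits at depth $d-d = 0$ and is labeled $y = x$), so $\widetilde{B}_d$ asks for a cut whose non-$\kappa$ labels have placements \emph{at most} $\gamma(x)$, while $\widehat{B}$ asks for \emph{strictly} smaller placements. I would therefore reduce the lemma to bounding the probability of a tie between $\gamma(x)$ and the placement of some other non-$\kappa$ label in the tree.

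Formally, I would introduce the auxiliary event
\begin{equation*}
    E = \bigl\{\gamma(y') \ne \gamma(x) \text{ for every non-}\kappa\text{ label } y' \ne x \text{ appearing in the tree}\bigr\}
\end{equation*}
and prove the inclusion $\widetilde{B}_d \cap E \subseteq \widehat{B}$. A cut $A$ contains no root vertex by definition, and Property~\ref{pp_p} of Definition~\ref{def_frozen_tree} forces every non-$\kappa$ label at a proper descendant of the root to differ from the root label $x$; hence all non-$\kappa$ labels of $A$ lie in $V\setminus\{x\}$. On the event $E$ their placements are then not merely $\le \gamma(x)$ but actually $< \gamma(x)$, so the same cut witnesses $\widehat{B}$. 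This yields $\Pr[\widehat{B}] \ge \Pr[\widetilde{B}_d] - \Pr[\overline{E}]$.

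Bounding $\Pr[\overline{E}]$ is then a direct application of the $\tau$-wise independence of placements. Property~\ref{pp_l} gives at most $\tau$ distinct non-$\kappa$ labels in the tree, and Remark~\ref{hash_to_permutation} (via Lemma~\ref{lem_k_wise}) guarantees that the placements of any collection of at most $\tau$ distinct variables in $V$ are mutually independent and uniform on $[n]$. In particular, for each non-$\kappa$ label $y' \ne x$ in the tree, $\Pr[\gamma(y') = \gamma(x)] = 1/n$, and a union bound over the at most $\tau - 1$ such labels yields $\Pr[\overline{E}] \le (\tau - 1)/n = O(\log n / n) = o(1)$, which is the required error term.

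The only subtle point, and where I would take some care, is verifying that the labels being union-bounded are genuine elements of $V$ so that Remark~\ref{hash_to_permutation} applies; this is exactly why the dummy label $\kappa$ is peeled off and why Property~\ref{pp_p} is invoked to rule out any repeated copy of $x$ inside the tree. Everything else is bookkeeping.
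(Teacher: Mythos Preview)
Your proposal is correct and follows essentially the same approach as the paper: identify $T_d$ with the full $\tau$-frozen tree, use Property~\ref{pp_p} to ensure non-$\kappa$ labels below the root differ from $x$, bound the probability of a placement tie via Remark~\ref{hash_to_permutation} and a union bound over the at most $\tau$ non-$\kappa$ labels from Property~\ref{pp_l}, and conclude $\Pr[\widetilde{B}_d] \le \Pr[\widehat{B}] + o(1)$. Your write-up is slightly more explicit about the event inclusion $\widetilde{B}_d \cap E \subseteq \widehat{B}$, but the argument is the same.
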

\begin{proof}
    $\widetilde{B}_d$ is the event defined for tree $T_d(x)$, which is the $\tau$-frozen tree.
    By Property~\ref{pp_p} in Definition~\ref{def_frozen_tree}, any label $y$ of the vertex below the root is different from $x$.
    Thus by Remark~\ref{hash_to_permutation}, $\Pr[\gamma(x) = \gamma(y)] = 1/n$.
    By a union bound over all the labels except $\kappa$ (whose number is at most $\tau$ by Property~\ref{pp_l} in Definition~\ref{def_frozen_tree}), with probability at most $\tau / n = \log n / n = o(1)$ there exists a variable with the same placement with $x$.
    Finally, we obtain $\Pr[\widetilde{B}_d] \le \Pr[\widehat{B}] + o(1)$ by inspecting the events.
\end{proof}

By Lemma~\ref{lem_B_Phi} and Inequality~(\ref{ineq_bb}), to prove Inequality~(\ref{ineq_B}), it suffices to prove the following:
\begin{equation}\label{ineq_B_d}
    \Pr[\widetilde{B}_d] \ge \lambda - o(1),
\end{equation}
which gives Lemma~\ref{lem_new_bits} by the discussion in the first paragraph of \S{\ref{subsec_lem_unique}}.

By Remark~\ref{hash_to_permutation}, for any integer $j \in [0, d]$, we write
\begin{equation}\label{eq_total_prob}
    \Pr[\widetilde{B}_j] = \sum_{r \in [n]} \Pr[\widetilde{B}_j \mid \gamma(y) = r] \cdot \Pr[\gamma(y) = r],
\end{equation}
and let $\widetilde{B}_j(r)$ be the event that there exists a cut $A$ in $T_j$ such that all labeling variables of $A$ except $\kappa$ have placements \emph{at most} $r$.
By Property~\ref{pp_p} in Definition~\ref{def_frozen_tree}, in $T_j$ no vertex below the root labeled $y \neq \kappa$ has label $y$, thus event $\widetilde{B}_j(r)$ is equivalent to event $\widetilde{B}_j$ conditioned on $\gamma(y) = r$.
We shall use $\Phi_j$ to denote a lower bound of $\Pr[\widetilde{B}_j]$ and use $\phi_j(r)$ to denote a lower bound of $\Pr[\widetilde{B}_j(r)]$, then by Inequality~(\ref{ineq_B_d}), Equality~(\ref{eq_total_prob}), and Remark~\ref{hash_to_permutation}, in this section it remains to prove the second inequality in the following:
\begin{equation}\label{ineq_Phi_d}
    \Phi_d \ge \sum_{r \in [n]} \frac{\phi_d(r)}{n} \ge \lambda - o(1).
\end{equation}

We lower bound each term of the sum in Inequality~(\ref{ineq_Phi_d}):
\begin{lemma}\label{lem_subtree}
    For any integer $j \in [d]$ and any $r \in [n]$,
    \begin{equation*}
        \phi_j(r) \ge \left( \frac{r}{n} + (1 - \frac{r}{n}) \cdot \phi_{j-1}(r) \right)^{k-1} ,
    \end{equation*}
    where $\phi_0(r) = 0$.
\end{lemma}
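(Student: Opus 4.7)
The plan is to proceed by induction on $j$, exploiting the recursive tree structure of $T_j$ and the $\tau$-wise independence of placements. The base case $j=0$ is immediate since $T_0$ is a single vertex and admits no cut, so $\Pr[\widetilde{B}_0(r)] = 0 = \phi_0(r)$.

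For the inductive step $j \ge 1$, let $T_j$ have root $v$ labeled $y \neq \kappa$. By Property~\ref{pp_k} of Definition~\ref{def_frozen_tree}, $v$ has $m \le k-1$ children $v_1, \ldots, v_m$, and for each non-$\kappa$ child $v_i$ labeled $y_i$ the subtree $T^{(i)}$ rooted at $v_i$ is itself a $T_{j-1}$. I would decompose the event $\widetilde{B}_j(r)$ across children: for each child $v_i$, a cut $A$ in $T_j$ must either (a) include $v_i \in A$, which requires $\gamma(y_i) \le r$ when $y_i \ne \kappa$ and is unrestricted when $y_i = \kappa$, or (b) exclude $v_i$ and have $A$ restrict to a cut in $T^{(i)}$ satisfying the placement condition, i.e., the event $\widetilde{B}_{j-1}(r)$ on $T^{(i)}$. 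Writing $E_i$ for the disjunction of (a) and (b), we have $\widetilde{B}_j(r) = \bigcap_{i=1}^m E_i$.

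For a non-$\kappa$ child $v_i$, Property~\ref{pp_p} ensures $y_i$ does not appear in $T^{(i)}$ other than at $v_i$; combined with Property~\ref{pp_l} and Remark~\ref{hash_to_permutation}, the placements of the at most $\tau$ distinct labels in the entire $\tau$-frozen tree are mutually independent, so $\gamma(y_i)$ is independent of the placements inside $T^{(i)}$. Therefore $\Pr[E_i] = \tfrac{r}{n} + (1-\tfrac{r}{n})\Pr[\widetilde{B}_{j-1}(r)\text{ on }T^{(i)}] \ge \tfrac{r}{n} + (1-\tfrac{r}{n})\phi_{j-1}(r)$ by the inductive hypothesis; for a $\kappa$-labeled child, $\Pr[E_i] = 1$, which certainly exceeds the same bound.

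The main obstacle will be combining $E_1, \ldots, E_m$ across children, because Property~\ref{pp_p} forbids label repetition only along a single root-to-leaf path, so labels may repeat across sibling subtrees and the $E_i$ can become correlated. My plan to handle this is via the FKG inequality: each $E_i$ is a decreasing event in the placement vector (smaller placements only make cuts easier to form), and under the mutual independence of the $\le \tau$ placements in $T_j$, decreasing events are positively correlated, so $\Pr[\bigcap_i E_i] \ge \prod_{i=1}^m \Pr[E_i]$. Since each factor is at most $1$ and $m \le k-1$, the product is bounded below by $\left(\tfrac{r}{n} + (1-\tfrac{r}{n})\phi_{j-1}(r)\right)^{k-1}$, which completes the induction and yields the claimed recursion.
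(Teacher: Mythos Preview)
Your proposal is correct and follows essentially the same argument as the paper: decompose $\widetilde{B}_j(r)$ over the children of the root, use Property~\ref{pp_p} together with $\tau$-wise independence (Remark~\ref{hash_to_permutation}) to bound each child's contribution, and apply the FKG inequality to handle correlations across sibling subtrees. The only cosmetic differences are that the paper dispatches the $\kappa$-child case up front (noting from line~\ref{line_add_chi} of Algorithm~\ref{alg_construct_tree} that either all children are non-$\kappa$ or there is a single $\kappa$ child) rather than folding it into the product, and that the paper phrases FKG in terms of monotone increasing families of the set $W_r(\gamma)=\{z\in V(T):\gamma(z)\le r\}$ rather than monotone decreasing events in the placement vector; these formulations are equivalent.
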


\begin{proof}
    First of all, all variables that appear as labels different from $\kappa$ in tree $T_j$ take placements \emph{independently} and uniformly from $[n]$ by Remark~\ref{hash_to_permutation}, because there are at most $\tau$ such variables (Property~\ref{pp_l} in Definition~\ref{def_frozen_tree}).
    Let $v$ be the root of $T_j$ and $y \neq \kappa$ be its label.
    If $v$ has only one child and it is labeled by $\kappa$, then $\phi_j(r) = 1$ and the lemma holds.
    Otherwise, $v$ does not have a child labeled $\kappa$ by line~\ref{line_add_chi} of Algorithm~\ref{alg_construct_tree}.
    By Property~\ref{pp_k} in Definition~\ref{def_frozen_tree}, $v$ has at most $k-1$ children, and let them be $v_1, v_2, \dots, v_t$ with labels $y_1, y_2, \dots, y_t$ respectively, where $0 \le t \le k-1$.
    If $t = 0$ and $j > 0$ then $\phi_j(r) = 1$,
    thus the lemma holds.
    If $j = 0$ then $\phi_0(r) = 0$ since there is no cut, thus the lemma also holds.
    It remains to prove the case for $t \ge 1$ and $j \ge 1$.

    For event $\widetilde{B}_{j}(r)$ to happen, the event $Q_i \coloneqq (\gamma(y_i) \le r) \vee \widetilde{B}_{j-1}(r)$ must happen simultaneously for all $i \in [t]$.
    By Property~\ref{pp_p} in Definition~\ref{def_frozen_tree}, no label of vertex under $v_i$ is $y_i$, thus by the independence of all placements, the event $\gamma(y_i) \le r$ is independent of $\widetilde{B}_{j-1}(r)$.
    So we obtain
    \begin{equation}\label{ineq_Q_i}
        \Pr \left[ Q_i \right] \ge (\frac{r}{n} + (1 - \frac{r}{n}) \phi_{j-1}(r)),
    \end{equation}
    which immediately gives the lemma on $t = 1$.

    It remains to prove that for $t \ge 2$, the $Q_i$'s are positively correlated ($i \in [t]$), which is slightly different from the argument in \cite{DBLP:journals/jacm/PaturiPSZ05}.
    Let $V(T)$ be the set of all the labels except $\kappa$ appearing in the constructed $\tau$-frozen tree.
    Let $W \coloneqq W_r(\gamma)$ be the set of variables $z$ such that $z \in V(T)$ and $\gamma(z) \le r$, then each variable from $V(T)$ is in $W$ with probability $r / n$ independently, moreover each $Q_i$ only depends on $W$.
    Let $\widetilde{W}_i$ be the set of all subsets $W' \subseteq V(T)$ such that $W = W'$ implies $Q_i$.
    Since for any $W' \in \widetilde{W}_i$ it must be that the superset of $W'$ is also in $\widetilde{W}_i$, we have that $\widetilde{W}_i$ is a monotonically increasing family of subsets.
    Therefore by the FKG inequality (cf. Theorem~{6.3.2} in \cite{alon2016probabilistic}), we obtain:
    \begin{equation}\label{ineq_fkg}
        \Pr \left[ Q_1 \wedge Q_2 \right] = \Pr \left[ W \in \widetilde{W}_1 \cap \widetilde{W}_2 \right] \ge \Pr \left[ W \in \widetilde{W}_1 \right] \cdot \Pr \left[ W \in \widetilde{W}_2 \right] = \Pr \left[ Q_1 \right] \cdot \Pr \left[ Q_2 \right] .
    \end{equation}

    Observe that the intersection of two monotonically increasing families of subsets is also monotonically increasing, therefore by an induction on $t$ and Inequality~(\ref{ineq_fkg}) we have that
    \begin{equation*}
        \Pr \left[ \bigwedge_{i \in [t]} Q_i \right] \ge \prod_{i \in [t]} \Pr \left[ Q_i \right].
    \end{equation*}
    The lemma follows immediately from Inequality~(\ref{ineq_Q_i}) and $t \le k-1$.
\end{proof}

We borrow one analytical result from \cite{DBLP:journals/jacm/PaturiPSZ05}:
\begin{lemma}[cf. Lemma 8 in \cite{DBLP:journals/jacm/PaturiPSZ05}]\label{lem_analytical_lb}
    Given $y \in [0, 1]$.
    \begin{itemize}
        \item Let $f(x, y) \coloneqq (y + (1-y)x)^{k-1}$.
        \item Define the sequence $\{R_j(y)\}_{j \ge 0}$ by the recurrence $R_{j}(y) = f(R_{j-1}(y), y)$ and $R_0(y) = 0$.
        \item Define $R_j \coloneqq \int_{0}^{1} R_{j}(y) \, \mathrm{d}y$.
    \end{itemize}
    Then $R_d \ge \lambda - o(1)$ for $d = \lfloor \log_k \tau \rfloor = \Theta(\log\log n)$.
\end{lemma}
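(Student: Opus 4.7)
The plan is to follow the standard analytical route from PPSZ in three steps: establish existence of the pointwise limit $R_\infty(y) \coloneqq \lim_{j \to \infty} R_j(y)$, bound the gap $R_\infty(y) - R_d(y)$ after $d = \Theta(\log\log n)$ iterations, and evaluate $\int_0^1 R_\infty(y)\,\mathrm{d}y$ in closed form to obtain $\lambda_k$.

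For the first step, I would fix $y \in [0,1]$ and show by induction that $\{R_j(y)\}_{j \ge 0}$ is monotonically nondecreasing and bounded above by $1$. The base case $R_0(y) = 0 \le y^{k-1} = R_1(y)$ is immediate, and since $f(\cdot, y)$ is nondecreasing on $[0,1]$, $R_{j-1}(y) \le R_j(y)$ propagates to $R_j(y) \le R_{j+1}(y)$. Hence $R_\infty(y)$ exists and coincides with the smallest fixed point of $x \mapsto (y + (1-y)x)^{k-1}$ in $[0,1]$.

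The second step, bounding the convergence rate, is the main obstacle. The derivative $\partial_x f(R_\infty(y), y) = (k-1)(1-y)(y + (1-y)R_\infty(y))^{k-2}$ is strictly less than $1$ wherever $R_\infty(y) < 1$ and the least fixed point is simple, so the iteration is locally geometric. To convert this into a uniform integral bound, I would partition $[0,1]$ into a \emph{bulk} region on which the contraction rate is at most $1 - \Omega(1/\log n)$ and a \emph{boundary} region of measure $o(1)$ containing those $y$ where the rate degenerates (e.g., where the two fixed points nearly merge, which for $f$ convex happens only near a single critical value). On the bulk, $d = \Theta(\log\log n)$ iterations drive the error to $o(1)$, since even a contraction of $1 - 1/\log n$ applied $\log \log n$ times is sufficient to shrink a constant-size gap to $o(1)$; on the boundary the trivial bound $R_\infty(y) - R_d(y) \le 1$ contributes only $o(1)$ to the integral.

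For the third step, I would substitute $z \coloneqq y + (1-y)R_\infty(y)$, so that $R_\infty(y) = z^{k-1}$ and the fixed-point equation rearranges to $y = (z - z^{k-1})/(1 - z^{k-1})$. A direct change-of-variable computation reduces $\int_0^1 R_\infty(y)\,\mathrm{d}y$ to an integral in $z$ that, after expanding $1/(1 - z^{k-1})$ as a geometric series and integrating term by term, evaluates to $\sum_{j \ge 1} 1/\bigl(j(kj-j+1)\bigr) = \lambda_k$. Since this calculation is essentially the one performed for Lemma~8 of \cite{DBLP:journals/jacm/PaturiPSZ05}, the cleanest route is to verify that our recurrence matches theirs and invoke their identity directly, leaving the convergence-rate step as the only piece that needs to be specialized to our choice $d = \lfloor \log_k \tau \rfloor$ with $\tau = \log n$.
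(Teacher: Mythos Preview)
The paper does not give its own proof of this lemma; it is introduced with ``We borrow one analytical result from \cite{DBLP:journals/jacm/PaturiPSZ05}'' and used as a black box. Your outline is the standard PPSZ route, and Steps~1 and~3 are fine: monotone convergence to the least fixed point $R_\infty(y)$ is immediate from the monotonicity of $f(\cdot,y)$, and the substitution $z = y + (1-y)R_\infty(y)$ does reduce $\int_0^1 R_\infty(y)\,\mathrm{d}y$ to the series $\sum_{j\ge 1} 1/(j(kj-j+1)) = \lambda_k$.

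Step~2, however, contains a genuine quantitative error. You assert that on the bulk a contraction factor $1 - \Omega(1/\log n)$ applied $d = \Theta(\log\log n)$ times shrinks a constant gap to $o(1)$. It does not:
\[
(1 - c/\log n)^{C\log\log n} \;=\; \exp\bigl(-\Theta(\log\log n/\log n)\bigr) \;\longrightarrow\; 1 .
\]
With only $d$ iterations available you need the bulk contraction to be $1 - \Omega((\log d)/d)$, not $1 - \Omega(1/\log n)$. Concretely, since $\partial_x f$ is nondecreasing in $x$, one has $R_\infty(y) - R_j(y) \le \bigl(\partial_x f(R_\infty(y),y)\bigr)^{j}$, and this derivative is $1 - \Theta(|y - y^*|)$ near the critical $y^*$ where the two fixed points coalesce (for $k=3$ it equals $2\min\{y,1-y\}$ with $y^*=1/2$). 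Taking the boundary to be $\{y : |y - y^*| \le \epsilon\}$ gives total integrated error at most $O(\epsilon) + e^{-\Omega(\epsilon d)}$; choosing $\epsilon$ of order $(\log d)/d$ makes this $O((\log d)/d) = o(1)$. Your bulk/boundary decomposition is the right idea---only the parameters must be chosen this way for $d = \Theta(\log\log n)$ to suffice.
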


\begin{lemma}\label{lem_our_analytical_lb}
    $\Phi_d \ge R_d$.
\end{lemma}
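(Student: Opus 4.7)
The plan is to prove $\Phi_d \ge R_d$ by coupling the combinatorial recurrence for $\phi_j(r)$ (Lemma~\ref{lem_subtree}) with the analytic recurrence for $R_j(y)$ (Lemma~\ref{lem_analytical_lb}) via the natural identification $y = r/n$, and then compare a Riemann sum to the integral defining $R_d$.

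First, I would establish two monotonicity facts by induction on $j$. Namely, for every $j \ge 0$, the function $R_j:[0,1]\to[0,1]$ takes values in $[0,1]$ and is non-decreasing. The base case $R_0\equiv 0$ is trivial. For the inductive step, writing $g(y) = y + (1-y)R_{j-1}(y) = R_{j-1}(y) + y(1-R_{j-1}(y))$ as a convex combination of $R_{j-1}(y)\in[0,1]$ and $1$ shows $g(y)\in[0,1]$, and a direct derivative computation (using $R_{j-1}\le 1$ and the inductive non-decreasingness of $R_{j-1}$) gives $g'(y)\ge 0$. Since $t\mapsto t^{k-1}$ is non-decreasing on $[0,1]$, $R_j = g^{k-1}$ inherits both properties. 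In particular, the map $x\mapsto f(x,y)=(y+(1-y)x)^{k-1}$ is non-decreasing in $x$ on $[0,1]$ for every fixed $y\in[0,1]$.

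Next, I would prove by induction on $j$ the pointwise inequality
\begin{equation*}
\phi_j(r) \ge R_j(r/n) \qquad \text{for all } r\in[n],\ j\in\{0,1,\dots,d\}.
\end{equation*}
The base case $j=0$ is immediate since $\phi_0(r)=0=R_0(r/n)$. For the inductive step, Lemma~\ref{lem_subtree} gives $\phi_j(r) \ge f(\phi_{j-1}(r),\, r/n)$, and by the inductive hypothesis together with monotonicity of $f$ in its first argument, this is at least $f(R_{j-1}(r/n),\, r/n) = R_j(r/n)$.

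Finally, I would plug the pointwise bound into Inequality~(\ref{ineq_Phi_d}) and compare the resulting sum to an integral. Because $R_d$ is non-decreasing on $[0,1]$, the right Riemann sum dominates the integral:
\begin{equation*}
\Phi_d \;\ge\; \sum_{r\in[n]} \frac{\phi_d(r)}{n} \;\ge\; \sum_{r=1}^{n} \frac{R_d(r/n)}{n} \;\ge\; \int_0^1 R_d(y)\,\mathrm{d}y \;=\; R_d,
\end{equation*}
which is the claim. I do not anticipate a serious obstacle here: the only mildly delicate point is confirming the monotonicity of $R_j$ in $y$ (needed both for the induction to preserve values in $[0,1]$ and for the final Riemann-sum-vs.-integral comparison), but this falls out of the convex-combination form of $g(y)$ and a one-line induction.
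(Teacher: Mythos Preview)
Your proposal is correct and follows essentially the same route as the paper's proof: an induction giving $\phi_j(r)\ge R_j(r/n)$ via Lemma~\ref{lem_subtree}, an induction showing $R_j$ is non-decreasing (and $[0,1]$-valued), and a right-Riemann-sum comparison with $\int_0^1 R_d$. The only cosmetic differences are that the paper swaps the order of the two inductions and argues monotonicity of $R_j$ directly from the monotonicity of $f(x,y)$ in each variable rather than via a derivative of $g$.
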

\begin{proof}
    Firstly, we shall prove that $\phi_j(r) \ge R_j(r / n)$ holds for any $r \in [n]$ and any integer $j \in [0, d]$, by an induction on $j$.
    The case $j = 0$ is trivial by definition.
    Now suppose it holds for $j = i$, and we prove it for $j = i + 1$. We have:
    \begin{equation*}
        \phi_{i+1}(r) \ge \left( \frac{r}{n} + (1 - \frac{r}{n}) \cdot \phi_{i}(r) \right)^{k-1} \ge \left( \frac{r}{n} + (1 - \frac{r}{n}) \cdot R_i(\frac{r}{n}) \right)^{k-1} = f \left( R_i(\frac{r}{n}), \frac{r}{n} \right) = R_{i+1}(\frac{r}{n}) , 
    \end{equation*}
    where the first inequality is from Lemma~\ref{lem_subtree}, the second inequality is from the induction hypothesis, and the last two equalities are from Lemma~\ref{lem_analytical_lb}, completing the induction.

    Secondly, we show that for any $j \ge 0$, $R_j(y)$ is a non-decreasing function on $y \in [0,1]$, which is by an induction on $j$.
    Function $R_0(y)$ is a constant function.
    Suppose it holds for $j=i$, we shall prove it for $j = i +1$.
    Observe that the function $f(x, y)$ is non-decreasing on both $x$ and $y$ and has range $[0,1]$ when $x, y \in [0, 1]$,
    thus $R_{i+1}(y) = f(R_{i}(y), y)$ is non-decreasing on $y$ since $R_{i}(y)$ is non-decreasing on $y$ by the induction hypothesis.
    So the conclusion holds.

    Finally, putting everything together, we obtain:
    \begin{equation*}
        \Phi_d \ge \sum_{r \in [n]} \frac{\phi_d(r)}{n} \ge \sum_{r \in [n]} \frac{R_d \left(r / n \right)}{n} \ge \sum_{r \in [n]} \int_{\frac{r-1}{n}}^{\frac{r}{n}} R_{d}(y) \, \mathrm{d}y = R_d ,
    \end{equation*}
    where the third expression is called the \emph{right Riemann sum} of $\int_{0}^{1} R_{d}(y) \, \mathrm{d}y$ and gives an upper bound of the integral when the integrand is non-decreasing, completing the proof.
\end{proof}

Lemma~\ref{lem_analytical_lb} and Lemma~\ref{lem_our_analytical_lb} immediately give Inequality~(\ref{ineq_Phi_d}).
This completes the proof of Lemma~\ref{lem_new_bits} and hence of Theorem~\ref{thm_unique}.

\section{Partial Derandomization for the General Case}\label{sec_general}

In this section we prove Theorem~\ref{thm_main} by a simple reduction from the General case to the Unique case and applying Theorem~\ref{thm_unique}.

\begin{lemma}\label{lem_fix_liquid}
    For any $k$-CNF $F$ with $S \ge 1$ solutions, there exists a partial assignment $a$ such that $|V(a)| = \lceil \log S \rceil$ and $F_a$ has exactly one solution.
\end{lemma}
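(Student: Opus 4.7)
The plan is to construct $a$ greedily, fixing one variable at a time so as to (roughly) halve the number of solutions at each step, then pad out with forced assignments at the end if necessary.

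Concretely, I would build a sequence $a_0 = \emptyset, a_1, a_2, \dots$ of partial assignments. At step $i \ge 1$, if $F_{a_{i-1}}$ still has at least two solutions, then not all variables can be frozen in $F_{a_{i-1}}$ (otherwise all solutions would coincide), so I can pick a liquid variable $x$. Both $F_{a_{i-1} \cup \{x\}}$ and $F_{a_{i-1} \cup \{\bar x\}}$ are then satisfiable, and their solution counts sum to $S(F_{a_{i-1}})$. I extend $a_i$ by whichever of $x, \bar x$ gives the \emph{smaller} branch, so
\begin{equation*}
    1 \le S(F_{a_i}) \le \left\lfloor S(F_{a_{i-1}})/2 \right\rfloor.
\end{equation*}
Iterating, $S(F_{a_i}) \le \lfloor S / 2^i \rfloor$, so once $i = \lceil \log S \rceil$ we have $S(F_{a_i}) \le 1$ together with $S(F_{a_i}) \ge 1$, hence $S(F_{a_i}) = 1$ exactly. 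This gives a partial assignment with $|V(a)| \le \lceil \log S \rceil$ making $F_a$ uniquely satisfiable.

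If the halving process happens to reach a unique solution in fewer than $\lceil \log S \rceil$ steps, I simply continue by picking any unassigned variable and setting it to the value it takes in the (now unique) solution of $F_a$; this preserves the property of having exactly one solution. I keep doing this until $|V(a)| = \lceil \log S \rceil$ on the nose, matching the statement. There is no real obstacle here beyond checking the two easy invariants — existence of a liquid variable whenever $S(F_a) \ge 2$, and the floor inequality $\lfloor \lfloor m/2 \rfloor / 2 \rfloor = \lfloor m/4 \rfloor$ used in the induction — both of which are immediate.
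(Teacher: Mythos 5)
Your proposal is correct and is essentially the paper's own argument: the paper's Algorithm 6 greedily assigns a liquid variable to the branch with fewer solutions (liquid variables exist precisely while there are at least two solutions), giving the same halving invariant $1 \le S(F_{a_i}) \le S/2^i$, and then pads with satisfiability-preserving literals until $|V(a)| = \lceil \log S \rceil$, exactly as you do. No substantive difference.
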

\begin{proof}
We shall explicitly construct a \emph{good} partial assignment $a$ using Algorithm~\ref{alg_construct_alpha}, such that $|V(a)| = \lceil \log S \rceil$ and $F_a$ has exactly one solution.
(Such partial assignment is an analysis tool only, which is \emph{not} known to the $k$-SAT algorithm.)

\begin{algorithm}[h]
\caption{$\textsf{Construct-}a(F)$}
\label{alg_construct_alpha}
\begin{algorithmic}[1]
\REQUIRE $k$-CNF $F$ with $S \ge 1$
\ENSURE partial assignment $a$
\STATE initialize assignment $a$ as an empty set
\WHILE {there exists a liquid variable $x \in V(F_a)$} {\label{line_iteration_begins}
    \STATE add literal $x$ to $a$ if $S(F_{a \cup \{x\}}) \le S(F_{a \cup \{\bar{x}\}})$, otherwise add literal $\bar{x}$ to $a$ \label{line_add_smaller_s}
}
\ENDWHILE\label{line_iteration_ends}
\WHILE{$|V(a)| < \lceil \log S \rceil$} {
\label{line_iteration2_begins}
    \STATE add literal $l$ corresponding to a variable in $V(F_a)$ to $a$ such that $F_{a \cup \{l\}}$ is satisfiable
}
\ENDWHILE\label{line_iteration2_ends}
\RETURN $a$
\end{algorithmic}
\end{algorithm}


    Let $a_i$ be the partial assignment at the end of the $i$-th iteration in the first loop (lines~\ref{line_iteration_begins}-\ref{line_iteration_ends}) in Algorithm~\ref{alg_construct_alpha}.
    By an induction, we shall prove that $1 \le S(F_{a_i}) \le S(F) / 2^i$ for all $i$.
    Then after at most $\lceil\log S\rceil$ iterations there must be no liquid variable, thus the remaining formula has only one solution.

    This trivially holds when $i = 0$.
    Since $x$ is liquid in $F_{a_i}$, both $F_{a_i \cup \{x\}}$ and $F_{a_i \cup \{\bar{x}\}}$ are satisfiable thus have at least one solution.
    Moreover, by $S(F_{a_i}) = S(F_{a_i \cup \{x\}}) + S(F_{a_i \cup \{\bar{x}\}})$ and line~\ref{line_add_smaller_s} we have that $S(F_{a_{i+1}}) \le S(F_{a_i}) / 2 \le S(F) / 2^{i + 1}$ by the induction hypothesis. Lines~\ref{line_iteration2_begins}-\ref{line_iteration2_ends} maintain satisfiability, so the lemma holds.
\end{proof}

The following inequality is widely used in information theory.
We include an one-line proof here for completeness:
\begin{lemma}\label{lem_bef}
    For any $\delta \in [0, 1]$, $\binom{n}{\delta n} \le 2^{\rho(\delta) n}$ where $\rho$ is the binary entropy function.
\end{lemma}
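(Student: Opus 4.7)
The plan is to give the standard one-line entropy bound via the binomial expansion. Expand $1 = (\delta + (1-\delta))^n$ by the binomial theorem and isolate the single term with index $k = \delta n$; since every term in the sum is nonnegative, that one term is at most $1$. Thus
\begin{equation*}
\binom{n}{\delta n} \delta^{\delta n} (1-\delta)^{(1-\delta) n} \le 1,
\end{equation*}
and rearranging gives $\binom{n}{\delta n} \le \delta^{-\delta n}(1-\delta)^{-(1-\delta)n} = 2^{(-\delta \log_2 \delta - (1-\delta)\log_2(1-\delta))\, n} = 2^{\rho(\delta) n}$, which is exactly the claimed bound.

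Two small technical points I would handle silently. First, $\delta n$ need not be an integer; the cleanest fix is to interpret $\binom{n}{\delta n}$ as $\binom{n}{\lfloor \delta n \rfloor}$ (as is standard in this context) and apply the identical argument with $\lfloor \delta n \rfloor$ in place of $\delta n$, absorbing the $O(1)$ multiplicative discrepancy into the entropy exponent (or simply noting it is irrelevant for the $2^{o(n)}$-accurate use in Theorem~\ref{thm_main}). Second, the boundary cases $\delta = 0$ and $\delta = 1$ use the stated convention $\rho(0) = \rho(1) = 0$ from the footnote to Theorem~\ref{thm_main}; in both cases the binomial coefficient is $1$ and the bound is trivial.

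There is no real obstacle here: the whole argument is the observation that one summand of a sum equal to $1$ cannot exceed $1$. I would present it in two lines of display math and move on, since the lemma is invoked only as a routine counting bound in the reduction from General to Unique $k$-SAT.
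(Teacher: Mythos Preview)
Your proposal is correct and is essentially identical to the paper's own one-line proof: expand $1 = (\delta + (1-\delta))^n$ by the binomial theorem, bound the single term at index $\delta n$ by $1$, and rearrange. The paper phrases it as the binomial distribution but it is the same computation, and your handling of the boundary cases $\delta \in \{0,1\}$ matches the paper's footnoted convention.
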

\begin{proof}
    Consider the binomial distribution with parameters $n$ and $\delta$:
    \begin{align*}
        1 = \sum_{i = 0}^n \binom{n}{i} {\delta}^i (1 - \delta)^{n-i}
        &\ge \binom{n}{\delta n} {\delta}^{\delta n} (1 - \delta)^{n - \delta n}
        = \binom{n}{\delta n} 2^{\delta n \log \delta + (1 - \delta) n \log(1-\delta)}
        = \binom{n}{\delta n} 2^{-\rho(\delta) n} ,
    \end{align*}
    giving the lemma.
\end{proof}

Our deterministic algorithm for General $k$-SAT is simple enough to be directly stated in the proof:
\begin{proof}[Proof of Theorem~\ref{thm_main}]
    Given $F$, we do not know $S$ in advance.
    The algorithm runs the following $n+1$ \emph{instances} in parallel (e.g., gives each instance an $\widetilde{O}(1)$ time slice on a sequential machine), and terminates whenever one of the instances terminates.
    For every integer $i \in [0, n]$, the $i$-th instance enumerates all possible combinations of $i$ variables from $V$, and for each combination enumerates all possible $2^i$ partial assignments $a$ on them, then tries to solve $F_a$ using the derandomized PPSZ from \S{\ref{sec_framework}} with cutoff time $2^{(1 - \lambda) (n - i) + o(n)}$, i.e., breaks the inner loop when it reaches the cutoff time and tries the next partial assignment.

    By Lemma~\ref{lem_fix_liquid}, in the $\lceil \log S \rceil$-th instance, there exist a combination of $\lceil \log S \rceil$ variables and a partial assignment $a$ on them such that $F_a$ has exactly one solution.
    By Theorem~\ref{thm_unique} and the previous paragraph, the instance returns a solution in time at most
    \begin{equation*}
        \binom{n}{\lceil \log S \rceil} \cdot 2^{\lceil \log S \rceil} \cdot 2^{(1 - \lambda) (n - \lceil \log S \rceil) + o(n)} \le \binom{n}{\lceil \log S \rceil} \cdot S^{\lambda} \cdot 2^{(1 - \lambda) n + o(n)}
        \le 2^{(1 - \lambda + \lambda \delta + \rho(\delta)) n + o(n)} ,
    \end{equation*}
    where the last inequality follows from setting $\delta \coloneqq (\log S) / n$ and applying Lemma~\ref{lem_bef}.
    The multiplicative overhead of this algorithm is at most $\widetilde{O}(n + 1) = \widetilde{O}(1)$, therefore Theorem~\ref{thm_main} follows.
\end{proof}

\section{Remarks} \label{sec_remark}

We presented a simple deterministic algorithm with upper bound as an increasing function of $S$, the number of solutions.
We know of at least two algorithms
(in fact, three, if including a random guessing for all variables),
whose upper bounds are decreasing functions of $S$: the PPZ algorithm \cite{DBLP:conf/focs/PaturiPZ97, DBLP:journals/jcss/CalabroIKP08} and Sch{\"{o}}ning's algorithm \cite{schoning1999probabilistic}, which, if properly derandomized, can be combined with ours to obtain a faster deterministic algorithm for General $k$-SAT.
For instance, simply running Sch{\"{o}}ning's algorithm and our algorithm concurrently gives a (randomized) algorithm faster than the current best deterministic $k$-SAT algorithm \cite{DBLP:conf/icalp/Liu18} when $k$ is large.
But the current derandomization of Sch{\"{o}}ning's algorithm \cite{dantsin2002deterministic, moser2011full} loses the benefit of running faster on formulae with more solutions.
Their original method using the covering code might not be able to overcome this drawback.


The ultimate problem is to fully derandomize PPSZ.
The method in \cite{DBLP:journals/jacm/PaturiPSZ05} for General $k$-SAT requires $\Omega(n)$-wise independence, thus not practical by enumeration.
We have not found a tighter upper bound for the number of guessed variables, which can be partly explained by the hard instance for PPSZ constructed in \cite{DBLP:conf/coco/SchederS17} with at least $(1 - \lambda + \theta) n$ guessed variables in expectation for some constant $\theta > 0$.

\bibliographystyle{alpha}
\bibliography{dPPSZ}

\end{document}